\documentclass[12pt,oneside,reqno]{amsart}
\usepackage[margin=1in]{geometry}
\usepackage{color}
\usepackage{esint,amssymb}
\usepackage{graphicx}
\usepackage{MnSymbol}
\usepackage{mathtools}
\usepackage[colorlinks=true, pdfstartview=FitV, linkcolor=blue, citecolor=blue, urlcolor=blue,pagebackref=false]{hyperref}
\usepackage{microtype}
\usepackage{amsmath}
\usepackage{xifthen}
\usepackage{verbatim}
\usepackage[style=numeric]{biblatex}
\addbibresource{bibliography.bib}
\definecolor{darkgreen}{rgb}{0,0.5,0}
\definecolor{darkblue}{rgb}{0,0,0.7}
\definecolor{darkred}{rgb}{0.9,0.1,0.1}

\newtheorem{theorem}{Theorem}
\newtheorem{proposition}[theorem]{Proposition}
\newtheorem{lemma}[theorem]{Lemma}
\newtheorem{corollary}[theorem]{Corollary}

\theoremstyle{definition}
\newtheorem{remark}[theorem]{Remark}

\newtheorem{definition}[theorem]{Definition}

\newcommand{\cref}[1]{Corollary~\ref{c.#1}}

\numberwithin{equation}{section}
\numberwithin{theorem}{section}

\newcommand{\test}[1][]{%
\ifthenelse{\equal{#1}{}}{omitted}{given}%
}

\newcommand{\pa}{\partial}

\renewcommand{\part}{\partial}

\usepackage{dsfont}

\begin{document}

\title{Model knot solutions for the Twisted Bogomolny Equations}

\begin{abstract}
In this paper we prove existence for model knot solutions to the dimensionally reduced twisted Kapustin-Witten equations on $\mathbb R^3_{+}$ for any twisting parameter $t \in (0,\infty)$. We start with the explicit solutions for $t=1$ derived in \cite{WFivebranes} and perform a continuity argument in $t$. This corroborates a prediction of Gaiotto and Witten \cite[][p. 961]{GW}.
\end{abstract}

\author[P. Dimakis]{Panagiotis Dimakis}
\address[Panagiotis Dimakis]{450 Serra Mall, Bldg 380, Stanford, CA 94305}
\email{pdimakis@stanford.edu}
\keywords{}
\subjclass[2010]{}
\date{\today}

\maketitle
\section{Introduction}

The Kapustin-Witten (KW) equations were introduced in \cite{KW} in a monumental attempt to use quantum field theory in order to understand certain aspects of the geometric Langlands correpsondence. These equations are defined on a four dimensional manifold $M^4$ equipped with a $G$-bundle $E$. If $A$ is a connection on $E$ and $\phi$ is an ad$(E)$-valued $1$-form, then the KW equations take the form 

\begin{equation}\label{KWE}
\begin{split}
F_A-\phi\wedge\phi +\star d_A\phi &= 0 \\
d_A\star\phi &= 0.
\end{split}
\end{equation}

In their paper, Kapustin and Witten derived these equations as a specialization of a $1$-parameter family of equations to a physically meaningful parameter value. The general family, which can be thought of as a phase-shifted form of complexified self-duality equations, can be obtained in the following way: If we define the complex connection $\mathcal A = A+i\phi$ and compute its curvature $F_{\mathcal A}$, this family of equations can be written as 

\begin{equation}\label{KWS}
e^{i\theta}F_{\mathcal A} = \bar{\star e^{i\theta}F_{\mathcal A}},
\end{equation}
where $\theta\in[0,\pi/2]$. The KW equations as written above, correspond to $\theta=\pi/4$. 

In subsequent papers, Witten in \cite{WFivebranes} and Gaiotto and Witten in \cite{GW} presented strong evidence that one should be able to derive knot invariants through the study of appropriate solutions of the Kapustin-Witten equations. More specifically, they developed the following conjecture:  the solution spaces of the KW equations when $M^4 = W^3\times \mathbb R^+$ , where $W \times \{0\}$ contains a knot $K$, and where we impose a certain set of singular boundary conditions along $(W \times \{0\}) \setminus K$ and separately along $K$, should contain enough information to capture the Jones polynomial when $W = S^3$ and to define a generalization of the Jones polynomial in general. 

The simplest test case for this conjecture is when $W = \mathbb R^3$ and the knot $K$ is a straight line. Witten in \cite{WFivebranes} was able to find explicit solutions to equations \eqref{KWE} with gauge group $G=SU(2)$. Witten's key observation was that by dimensionally reducing the KW equations in the direction of the line representing the knot, one ends up with a system of equations, called the Extended Bogomolny equations (EBE), which have a Hermitian Yang-Mills structure. As it will be explained below, this structure reduces the problem to solving just one equation, called the moment map equation. By requiring that the solution be rotationally and scale invariant and satisfy the aforementioned asymptotic conditions along $(\mathbb R^2\times \{0\}) \setminus \{0\}$ and seperately along $\{0\}$, Witten reduced the problem to a non-linear ordinary differential equation, which he was able to solve explicitly. These solutions which are called the model knot solutions, are indexed by a non-negative integer $k$, which for physical reasons goes by the name magnetic charge of the knot. They are the building blocks of the theory since any solution should look like one of these model solutions near a point of an arbitrary knot. 

The conjecture in the case of the straight unknot was settled in \cite{MW2} through the construction of a non linear Weitzenbock formula sensitive to the asymptotic conditions and through the study of the mapping properties of the linearization of equations \eqref{KWE}. Despite the validity of the above conjecture for the "unknot", there has been evidence, already explained in \cite[e.g.,][section 2.3]{GW}, the Kapustin-Witten equations as given in \eqref{KWE} together with the asymptotic conditions used in \cite{MW1} and \cite{MW2} do not provide sufficient information for the derivation of the Jones polynomial in the general case. 

In \cite{GW} the authors suggested two possible alternatives to overcome this issue. The first alternative, which goes by the name of complex symmetry breaking, was to require a different asymptotic behavior for the solutions studied. We will pursue ideas related to complex symmetry breaking in a companion paper to the present one. The second alternative, which is what motivated the writing of this paper, was to deform the equations \eqref{KWE} by changing the $\theta$ parameter in the family \eqref{KWS} and hope that for appropriately chosen angle $\theta$ and without changing the asymptotic conditions, the solutions would provide enough information to derive the Jones polynomial. In particular, they conjectured the existence of local model knot solutions of arbitrary charge $k$ with the same asymptotic behavior as the explicit solutions of Witten. The main theorem of this paper proves the existence of such model knot solutions:

\begin{theorem}
Given any natural number $k$ and angle $\theta\in(0,\pi/2)$, there exists a model knot solution of charge $k$ to the $\theta$-deformed KW equations \eqref{KWS}.
\end{theorem}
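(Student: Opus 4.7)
The plan is to prove the theorem by a continuity argument in the twisting parameter, taking as starting point the explicit $\theta = \pi/4$ (equivalently $t=1$) model knot solutions of charge $k$ constructed by Witten in \cite{WFivebranes}. As described in the introduction, rotational and scale invariance together with the Hermitian Yang-Mills structure of the Extended Bogomolny equations reduce the problem, for each $\theta$, to a single non-linear ODE on a half-line with prescribed singular asymptotics along the knot $K$ and along $(\R^2 \setminus \{0\}) \times \{0\}$. Let $S_k \subset (0, \pi/2)$ denote the set of angles for which such a reduced ODE admits a model knot solution of charge $k$. By Witten's result, $\pi/4 \in S_k$, so $S_k \neq \emptyset$, and since $(0,\pi/2)$ is connected, it suffices to prove that $S_k$ is both open and closed.

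For \textbf{openness} I would apply the implicit function theorem in suitable weighted Hölder (or weighted Sobolev) spaces encoding the prescribed asymptotic conditions at each singular stratum. The key point is to show that the linearization $L_\theta$ of the $\theta$-deformed moment map equation at a model knot solution is an isomorphism between the chosen spaces. Ellipticity of $L_\theta$ follows from the Hermitian Yang-Mills structure; injectivity should come from the positivity of the Hessian of the natural moment-map functional at a zero (modulo gauge); and surjectivity should follow from Fredholm theory once the weights are tuned so that the index of $L_\theta$ vanishes.

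For \textbf{closedness}, consider a sequence $\theta_n \in S_k$ with $\theta_n \to \theta_\infty \in (0, \pi/2)$, and let $u_n$ denote the corresponding solutions of the reduced ODE. The task is to establish $\theta_n$-uniform estimates: interior regularity on compact subsets, plus precise control of the asymptotic behavior of $u_n$ near each singular locus. The moment-map structure should furnish a convex energy functional whose $\theta$-dependent Hessian remains non-degenerate on compact subintervals of $(0, \pi/2)$, and I would couple this with ODE-level barrier arguments, constructing sub- and supersolutions as perturbations of the explicit $\theta = \pi/4$ model. Passing to the limit in the ODE then produces a candidate solution at $\theta_\infty$, and uniform asymptotic control ensures that it still satisfies the model-knot boundary conditions, hence $\theta_\infty \in S_k$.

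The main obstacle I anticipate is the closedness step, and more specifically the construction of $\theta$-uniform barriers and matched asymptotic expansions near the corner where the two singular strata meet. The asymptotic profiles along $K$ and along $(\R^2 \setminus \{0\}) \times \{0\}$ depend non-trivially on $\theta$, and they must be glued through this corner; a priori degeneration could take place there as $\theta$ drifts away from $\pi/4$. Overcoming this will likely require a careful perturbative analysis of the reduced ODE around the explicit $t = 1$ solution, together with monotonicity-type identities inherited from the Hermitian Yang-Mills / moment map formalism that prevent loss of mass at either end of the half-line.
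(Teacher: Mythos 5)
Your overall strategy---continuity in the twisting parameter starting from Witten's explicit $t=1$ solutions, openness via an inverse/implicit function theorem on weighted spaces adapted to the singular asymptotics, closedness via uniform a priori estimates---is exactly the paper's. Two corrections on the setup, though. First, for $\theta\neq\pi/4$ the rotation/scale-invariant reduction does not give a single ODE: writing the metric as \eqref{metric} and imposing the ansatz \eqref{ansatz}, one obtains the coupled second-order system \eqref{odesystem} for the pair $(u,v)$, and $v$ (equivalently $\Sigma$) is forced to be nonzero as soon as $\beta\neq 0$ by the leading-order expansion of the second equation of \eqref{momentsystem}. Second, on openness, the paper does not tune weights to make the index vanish; it restricts to sections of the symmetric form \eqref{ansatz2}, computes the indicial roots at the two singular loci ($\omega=0$ and $\psi=0$), and gets invertibility from essential self-adjointness together with vanishing of the kernel by an integration-by-parts positivity argument---which is close in spirit to the positivity you invoke, so this part of your plan is essentially sound, provided you also check (as the paper does) that the error term and the linearized operator preserve the symmetric ansatz.

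The genuine gap is the closedness step. You name the right goal (uniform interior bounds plus uniform control of the asymptotics at both ends) but offer only unconstructed barriers and an appeal to convexity of an energy functional; nothing in the outline yields quantitative, $\theta$-uniform control, and some such control must in fact fail as $\theta\to 0,\pi/2$: the paper's estimates degenerate precisely there, through the requirement $\zeta=\sin\beta<1/2$, i.e.\ $d_\zeta=1-2\zeta>0$. The missing key idea is the first integral \eqref{integrated}, $u'^2=(4V^2-\zeta^2 v'^2)e^{-2u}+(k+1)^2$, obtained by combining the two equations of \eqref{compactodesystem} and fixing the constant of integration with the asymptotics \eqref{asymp}. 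From it the paper deduces monotonicity of $u'$, the pointwise bound $4V^2\geq\zeta^2v'^2$, the crude exponential bounds \eqref{crude v bounds}, then the refined estimates $|v'(\tau)|\leq C\tau$ and $|u(\tau)-\log(2\tau)|\leq C\tau^2$ near $\tau=0$, and finally a bootstrap at infinity giving exponential decay of $v'$ and boundedness of $V$ (this is where $1-2\zeta>0$ enters). Without an identity of this kind, or an actually constructed family of sub- and supersolutions with the correct $\theta$-dependent behavior at both singular ends, the compactness argument does not close: a limit of solutions could a priori lose the boundary asymptotics \eqref{asymp}, which is exactly what these uniform estimates are needed to rule out.
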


The idea of proof of this theorem is very similar in spirit to the deformation argument of He -Mazzeo \cite{MH2}. By dimensionally reducing the $\theta$-deformed KW equations we obtain a system of equations, which we will call the Twisted Extended Bogomolny Equations (TEBE), that continues to possess a Hermitian Yang-Mills structure. As we will explain in greater detail below, this means that there exist first order differential operators $\mathcal D_i$ ,$i~=~1,2,3$ such that the TEBE equations assume the form 

\begin{equation*}
\begin{split}
 &[\mathcal D_i , \mathcal D_j ]=0,~\text{i,j} =1,2,3, \\
\Lambda(\cos^2\beta&[\mathcal D_1, \mathcal D_1^{\dagger}]-\sin^2\beta[\mathcal D_2 , \mathcal D_2^{\dagger}])+[\mathcal D_3 , \mathcal D_3^{\dagger}] =0.
\end{split}
\end{equation*}
The parameter $\beta$ in these equations is related to the parameters $\theta$ and $t$ in the following way: 

\begin{equation*}
\tan\theta = t = \tan(\pi/4+3\beta/2).
\end{equation*}
Since the equations of the form $[\mathcal D_i , \mathcal D_j ]=0$ are invariant under complex gauge transformations, we can choose particualrly simple $\mathcal D_i$ that satisfy the first three equations and then find a particular gauge so that the final equation is satisfied as well. In this case, we reduce the problem to the study of the final equation

\begin{equation}\label{momentmap}
\Lambda(\cos^2\beta[\mathcal D_1, \mathcal D_1^{\dagger}]-\sin^2\beta[\mathcal D_2 , \mathcal D_2^{\dagger}])+[\mathcal D_3 , \mathcal D_3^{\dagger}] =0,
\end{equation}
where the unknown becomes a gauge transformation, or equivalently as we will show in the next section, a hermitian metric. At this point, our approach diverges from that of \cite{MH2}. The main reason is that in the present context it is not obvious how to construct a good approximate metric in order to perform the corresponding continuity argument. We instead start by writing the solution metric in the form \eqref{metric} and consider the ansatz \eqref{ansatz}. The motivation behind this ansatz is that a model solution of charge $k$ should possess certain symmetries and also converge to the explicit model solution of charge $k$ when $\beta \to 0$. 

The proof of the main theorem is comprized of two parts. The first part is a careful study of the linearization of \eqref{momentmap} around an exact solution for each $\beta$. We show that the linearized operator is an isomorphism between appropriately chosen Banach spaces, and apply the inverse function theorem to prove that if we already have a solution for $\beta$ then we have a solution in an open neighborhood of $\beta$. 

The second part contains the a priori estimates necessary for closedness. The key observation is that once we assume that the solution has the particular form \eqref{ansatz} mentioned above together with specific asymptotics, we can find an integrated identity that the solution has to satisfy. Using this identity, we obtain crude global bounds which we progressively refine, first in a neighborhood of zero and then everywhere else. 

As the careful reader will notice, the a-priori estimates fail precisely at $\theta=0,\pi/2$. Reading through the proof, it might seem at first that this is a mathematical artifact in the sense that maybe the ansatz \eqref{ansatz} is too restrictive in general. However, the values $0$ and $\pi/2$ are special in the following regard. If we re-write the family \eqref{KWS} as 

\begin{equation}
\begin{split}
(F_A-\phi\wedge\phi -t\star d_A\phi)^+ &= 0 \\
(F_A-\phi\wedge\phi +t^{-1}\star d_A\phi)^- &= 0 \\
d_A\star\phi &= 0
\end{split}
\end{equation}
where $t=\tan\theta$, the specialization of the equations at $t=\theta=0$ is given by 

\begin{equation}
\begin{split}
(F_A-\phi\wedge\phi)^+ &= 0 \\
(d_A\phi)^- &= 0 \\
d_A\star\phi &= 0.
\end{split}
\end{equation}
In this special value, the equations decouple. These latter equations should be considered as a four dimensional analog of the Hitchin equations. Currently we do not know whether there should exist smooth model knot solutions for $\theta = 0$ or not. We hope to address the (non-)existence of the model knot solutions as $\theta\to 0$ elsewhere. The same issue arises when $\theta\to\pi/2$.

\subsection{Acknowledgements} Before starting, I would like to thank Rafe Mazzeo for his patient guidance and insights on this paper across countless meetings and especially for his constant encouragement throughout the course of conceiving and writing this paper.

\section{The twisted Bogomolny equations}

\subsection{Deriving the equations} 

Let us begin by introducing the Twisted Kapustin Witten (TKW) equations  for a general parameter $t\in (0,\infty)$. These equations are defined just like the usual KW equations on a four dimensional manifold $M^4$ equipped with an $SU(2)$-bundle $E$. If $\hat A$ is a connection on $E$ and $\Phi$ is an ad$(E)$-valued $1$-form, then the TKW equations take the form 

\begin{equation}\label{TwistedKW}
\begin{split}
F_{\hat A} - \Phi\wedge\Phi + \frac{t-t^{-1}}{2}d_{\hat A}\Phi + \frac{t+t^{-1}}{2}\star d_{\hat A}\Phi &= 0 \\
d_{\hat A}\star\Phi &= 0.
\end{split}
\end{equation}

We are primarily interested in the case where the four dimensional manifold $M^4$ decomposes into a product as $M^4 \cong \mathbb R_{x_1}\times \Sigma_z\times\mathbb R_y^+$, with local coordinates $(x_1,z,y)$, where $\Sigma$ is a Riemann surface. We wish to dimensionally reduce the equations in the $x_1$ direction, which means that we want to consider $x_1$-invariant solutions. Expressing $\hat A= A+A_1\, dx_1$, $\Phi= \phi+\phi_1\, dx_1$ and fixing the orientation to be $\, dx_1\wedge\,dz\wedge\,dy$, the dimensionally reduced Twisted Kapustin-Witten equations take the form 

\begin{equation}\label{dimensionallyreduced}
\begin{split}
F_A - \phi\wedge\phi + \frac{t-t^{-1}}{2}d_A\phi + \frac{t+t^{-1}}{2}(\star d_A\phi_1 + \star[\phi,A_1]) &= 0\\
d_AA_1 - [\phi,\phi_1] + \frac{t-t^{-1}}{2}(d_A\phi_1 + [\phi,A_1]) - \frac{t+t^{-1}}{2}\star d_A\phi &= 0\\
d_A^{\star}\phi - [\phi_1,A_1] &= 0,
\end{split}
\end{equation}
where $\star$ is the Hodge star on $\Sigma\times\mathbb R_y^+$. In \cite{GW} Gaiotto and Witten observed that if we write $t =\tan(\pi/4 - 3\beta/2)$ and under the condition $A_1 -\tan\beta\phi_1 =0$, the above dimensionally reduced equations have a Hermitian Yang-Mills structure. In particular, if we write $d_A = D_{\bar z}\,d{\bar z} + D_z\,dz + D_y\,dy$, $\phi =\phi_{\bar z}\,d\bar z +\phi_z\,dz$ and define the operators

\begin{equation}\label{unitaryoperators}
\begin{split}
&\mathcal D_1 = (D_{\bar z} -\phi_{\bar z}\tan\beta) \,d\bar z,~ \mathcal D_2 = (D_z + \phi_z \cot\beta) \,dz,~\mathcal D_3 = D_y-i\frac{\phi_1}{\cos\beta} \\
&\mathcal D_1^{\dagger} = (D_z - \phi_z \tan\beta) \,dz,~ \mathcal D_2^{\dagger} = (D_{\bar z} +\phi_{\bar z}\cot\beta) \,d\bar z,~\mathcal D_3^{\dagger} = D_y + i\frac{\phi_1}{\cos\beta},
\end{split}
\end{equation}
the equations \eqref{dimensionallyreduced} take the following form 

\begin{equation}\label{TEBE}
\begin{split}
 &[\mathcal D_i , \mathcal D_j ]=0,~\text{i,j} =1,2,3, \\
2i\Lambda(\cos^2\beta&[\mathcal D_1, \mathcal D_1^{\dagger}]-\sin^2\beta[\mathcal D_2 , \mathcal D_2^{\dagger}])+[\mathcal D_3 , \mathcal D_3^{\dagger}] =0,
\end{split}
\end{equation}
where $\Lambda:\Omega^{1,1}\rightarrow \Omega^0$ is the inner product with the K\"{a}hler form $\frac{i}{2}\,dz\wedge\,d\bar z$. From this point onward, we will call the latter equations the Twisted Extended Bogomolny Equations (TEBE). A complete derivation of \eqref{TEBE} from \eqref{dimensionallyreduced} can be found in the appendix of \cite{MHOpers}.

\begin{remark}
As we will demonstrate in the next subsection, the operators in \eqref{unitaryoperators} are in unitary gauge.
\end{remark}

\subsection{Holomorphic versus unitary gauge} In this subsection we explain how to move between different gauge choices and derive formulas for the connection and the field coefficients. We will focus on the case where the Riemann surface $\Sigma$ is the complex plane since this is the relevant case for the present paper. See \cite{MHOpers} for a treatment of the general case. 

Fixing the bundle $E$ over $\mathbb C \times \mathbb R^+$ with hermitian metric $H$, let $\mathcal D_i$ be first order differential operators acting on sections $s\in C^{\infty}(\mathbb C \times \mathbb R^+;E)$ such that 

\begin{align*}
&\mathcal D_1 fs = (\pa_{\bar z}f) s + f\mathcal D_1s,~\mathcal D_2 fs = (\pa_z f) s + f\mathcal D_2s,~\mathcal D_3 fs = (\pa_y f) s + f\mathcal D_3s \\
&[\mathcal D_i, \mathcal D_j] =0,
\end{align*}
where $f$ is a smooth function on the base. We define the adjoints of these operators with respect to the metric $H$ in the following way:

\begin{align*}
\pa_{\bar z} H(s,s') &= H(\mathcal D_1^{\dagger} s, s') + H(s, \mathcal D_1s') \\
\pa_z H(s,s') &= H(\mathcal D_2^{\dagger} s, s') + H(s, \mathcal D_2s') \\
\pa_y H(s,s') &= H(\mathcal D_3^{\dagger} s, s') + H(s, \mathcal D_3s').
\end{align*}
Here, the notation $H(s,s'):= \bar s^TH s'$, so the inner product is complex linear in the second argument. This definition differs from the usual one but we stick to the current definition because it is consistent with most of the work done on the KW equations. We denote the set of differential operators $\{\mathcal D_1, \mathcal D_2, \mathcal D_3\}$ by $\Theta$. Next, we need to define the following operators:

\begin{equation}\label{Chernconnection}
\begin{split}
\mathcal D_z &= \sin^2\beta\mathcal D_2 + \cos^2\beta\mathcal D_1^{\dagger},~\mathcal D_{\bar z} = \sin^2\beta\mathcal D_2^{\dagger} + \cos^2\beta\mathcal D_1,\\
\phi_z &= \sin\beta\cos\beta(\mathcal D_2 - \mathcal D_1^{\dagger}),~\phi_{\bar z} = \sin\beta\cos\beta(\mathcal D_2^{\dagger} - \mathcal D_1), \\
\mathcal D_y &= \frac{\mathcal D_3 + \mathcal D_3^{\dagger}}{2},~\phi_1 = \frac{i\cos\beta}{2}(\mathcal D_3- \mathcal D_3^{\dagger}).
\end{split}
\end{equation}
From these operators we obtain a connection $\nabla_A := \mathcal D_z + \mathcal D_{\bar z} + \mathcal D_y$ and a triple $(A,\phi,\phi_1)$, called the Chern connection corresponding to the data $(E,\Theta , H)$.

We say that the operators $\Theta$ are in unitary gauge if the triple $(A,\phi,\phi_1)$ consists of unitary matrices and that they are in holomorphic gauge if $\mathcal D_1 = \pa_{\bar z}\,d\bar z$ and $\mathcal D_3 = \pa_y$. It immediately follows from this definition that the operators in \eqref{unitaryoperators} are in unitary gauge.

Working in holomorphic gauge, the commutation relations $[\mathcal D_i,\mathcal D_j] = 0$ imply that $\mathcal D_2 = \pa_z + B_z$, where $B_z$ is a trace zero matrix whose entries are holomorphic functions in $z$. As we discuss in the next subsection, in order to generate model solutions of different charges, different choices of $B_z$ will be needed. At this point it is instructive to calculate the adjoints of $\Theta$ in holomorphic gauge and explicitly write down the last equation in \eqref{TEBE}. We have that 

\begin{equation*}
\begin{split}
\pa_{\bar z} H(s,s') &= H(\pa_z s,s') + (\pa_{\bar z}H)(s,s') + H(s,\pa_{\bar z}s') \\
&= H((\pa_z +H^{-1}\pa_z H,s') + H(s,\pa_{\bar z}s')
\end{split}
\end{equation*}
and therefore $\mathcal D_1^{\dagger} = \pa_z + H^{-1}\pa_z H = H^{-1}\circ \pa_z \circ H$. Similarly, we obtain that $\mathcal D_2^{\dagger} = H^{-1}\circ (\pa_{\bar z} -B_z^{\dagger})\circ H$ and $\mathcal D_3^{\dagger} = H^{-1}\circ \pa_y \circ H$. Plugging these formulas into the last equation in \eqref{TEBE} we obtain the following equation in the metric $H$:

\begin{equation}\label{metricmoment}
4\pa_{\bar z}H^{-1}\pa_z H + 4(\pa_z + B_z)H^{-1}(\pa_{\bar z} - B_z^{\dagger})H + \pa_yH^{-1}\pa_yH =0.
\end{equation}

\begin{proposition}
Assume that the data $(E,\Theta, H)$ satisfy \eqref{TEBE}. Then the data $(E,\Theta^g, H^g)$ do as well, where $\Theta^g := (g^{-1}\circ \mathcal D_1 \circ g, g^{-1}\circ \mathcal D_2 \circ g, g^{-1}\circ \mathcal D_3 \circ g)$, and $H^g := g^{\dagger}Hg$.
\end{proposition}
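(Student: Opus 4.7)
The plan is to verify both components of \eqref{TEBE} for the transformed data by pushing the conjugation by $g$ through each operator, using the identity $H^g(s,s') = H(gs,gs')$ throughout. The argument splits naturally into the holomorphic part (the commutators) and the Hermitian part (the moment map equation), and both parts reduce to the fact that conjugation by $g$ is compatible with the algebraic structures involved.

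For the commutator equations, since each $\mathcal D_i^g$ is defined as conjugation of $\mathcal D_i$ by $g$, we have immediately that $[\mathcal D_i^g, \mathcal D_j^g] = g^{-1}\circ[\mathcal D_i,\mathcal D_j]\circ g$, which vanishes by hypothesis. A brief sanity check is that $\mathcal D_i^g$ still satisfies the Leibniz rule with the same symbol as $\mathcal D_i$: for instance $\mathcal D_1^g(fs) = g^{-1}\mathcal D_1(fgs) = (\pa_{\bar z}f)s + f\mathcal D_1^g s$, so $\Theta^g$ genuinely consists of first-order operators of the required type.

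The substantive step is showing the moment map equation is preserved. I would first identify the adjoint of $\mathcal D_i^g$ with respect to $H^g$. Starting from the defining relation applied to $H^g$, and rewriting $H^g(s,s') = H(gs,gs')$, one differentiates and invokes the defining relation for $\mathcal D_i^\dagger$ with respect to $H$. Inserting $g\circ g^{-1}$ inside each slot then yields the clean formula $(\mathcal D_i^g)^\dagger = g^{-1}\circ \mathcal D_i^\dagger\circ g$ for $i=1,2,3$. Given this, each bracket transforms as
\[
[\mathcal D_i^g,(\mathcal D_j^g)^\dagger] \;=\; g^{-1}\circ[\mathcal D_i,\mathcal D_j^\dagger]\circ g,
\]
and since $\Lambda$ is the scalar contraction with the K\"ahler form it commutes with this conjugation. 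Assembling the terms, the full left-hand side of the moment map equation for $(E,\Theta^g,H^g)$ equals $g^{-1}$ times the corresponding expression for $(E,\Theta,H)$ times $g$, which vanishes by hypothesis.

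The only delicate point, and what I would call the main obstacle (though it is really bookkeeping rather than a genuine difficulty), is the convention that $H(s,s') = \overline{s}^T H s'$ is complex-linear in the second argument. This forces careful tracking of $g$ versus $g^\dagger$ when moving between $H^g$ and $H$; once this convention is respected consistently, the proposition reduces to the observation that conjugation by $g$ is simultaneously compatible with the Lie bracket and with the adjoint operation once the metric is correspondingly modified to $H^g$.
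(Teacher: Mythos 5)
Your proposal is correct and follows essentially the same route as the paper: the commutator equations are preserved trivially under conjugation, and the key step --- deriving $(\mathcal D_i^g)^{\dagger_{H^g}} = g^{-1}\circ \mathcal D_i^{\dagger}\circ g$ from $H^g(s,s') = H(gs,gs')$ and the defining relation of the adjoint --- is exactly the computation the paper performs, after which the moment map expression conjugates by $g$ and vanishes.
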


\begin{proof}
The operators $\Theta^{g}$ clearly satisfy the first three equations in \eqref{TEBE}. In order to verify that the forth equation holds, we need to compute the adjoints of the modified operators with respect to the modified metric $H^{g}$. We have that 

\begin{align*}
\pa_{\bar z} H^{g}(s,s') &= H(gs,gs') \\
&= H(\mathcal D_1^{\dagger}gs,gs') + H(gs,\mathcal D_1gs')\\
&= H^{g}(g^{-1}\circ \mathcal D_1^{\dagger} \circ g s, s') + H^{g}(s, g^{-1}\circ \mathcal D_1 \circ g s'),
\end{align*}
and therefore, $\mathcal D_1^{\dagger_{H^{g}}} = g^{-1}\circ \mathcal D_1^{\dagger} \circ g$. Similarly, $\mathcal D_2^{\dagger_{H^{g}}} = g^{-1}\circ \mathcal D_2^{\dagger} \circ g$ and $\mathcal D_3^{\dagger_{H^{g}}} = g^{-1}\circ \mathcal D_3^{\dagger} \circ g$. Plugging these formulas into the last equation we obtain equation \eqref{metricmoment} conjugated with $g$ and therefore the last equation is also satisfied. 

\end{proof}

\begin{remark}
As we mentioned in the previous subsection, the TEBE equations have a Hermitian Yang-Mills structure. Let us briefly explain this now that we have developed the required vocabulary. Denote by $\mathcal G_{\mathbb C}$ the group of complex gauge transformations of $E$. As we saw, the differential operators $\mathcal D_i$ transform under such gauge transformations in the following way:

\begin{equation*}
\mathcal D_i^g = g^{-1}\circ \mathcal D_i \circ g, ~~ \mathcal D_i^{\dagger_{H^{g}}}= g^{-1}\circ \mathcal D_i^{\dagger} \circ g
\end{equation*}
It is clear that the first three equations in \eqref{TEBE} are invariant under the action of the group $\mathcal G_{\mathbb C}$, while the last equation is only invariant under the subgroup of  unitary gauge transformations $g$ with respect to the metric $H$, or more explicitly those which satisfy $g^{\dagger}Hg = H$. Because of this, the last equation in \eqref{TEBE} may be considered as the moment map for the first three equations. From here on we use this name to refer to this equation. 
\end{remark}

Finally, we show how to pass from holomorphic to unitary gauge. From the definition it follows that in unitary gauge, if $\mathcal D_1 = \pa_{\bar z} + B$ then $\mathcal D_1^{\dagger} = \pa_z - B^{\dagger}$. Writing $\mathcal D_1 = g_1^{-1} \circ \pa_{\bar z} \circ g_1$, $B = g_1^{-1}(\pa_{\bar z}g_1)$. Using the formulas above for the adjoint, $D_1^{\dagger} = g_1^{-1}H^{-1}\circ \pa_z \circ Hg_1$ and therefore, it must hold that $ g_1^{-1}(\pa_{\bar z}g_1) = g_1^{\dagger}H (\pa_z H^{-1}(g_1^{-1})^{\dagger})$. Writing $H = g^{\dagger}g$, we see that the only way this is possible is if $g_1 = g^{-1}$. Therefore, we obtain the following formulas for the operators $\Theta$ is unitary gauge:

\begin{equation}\label{unitary gauge}
\begin{split}
\mathcal D_1 &= g \circ \pa_{\bar z} \circ g^{-1} = \pa_{\bar z } - (\pa_{\bar z}g)g^{-1} \\
\mathcal D_1^{\dagger} &= (g^{-1})^{\dagger} \circ \pa_z \circ g^{\dagger} = \pa_z + (g^{-1})^{\dagger}\pa_zg^{\dagger}\\
\mathcal D_2 &= g \circ (\pa_z+ B_z) \circ g^{-1} = \pa_z - (\pa_z g)g^{-1} + gB_zg^{-1} \\
\mathcal D_2^{\dagger} &= (g^{-1})^{\dagger} \circ (\pa_{\bar z} -B_z^{\dagger}) \circ g^{\dagger} = \pa_{\bar z} + (g^{-1})^{\dagger}\pa_{\bar z}g^{\dagger} -  (g^{-1})^{\dagger} B_z^{\dagger} g^{\dagger}\\
\mathcal D_3 &= g \circ \pa_y \circ g^{-1} = \pa_y - (\pa_y g)g^{-1}\\
\mathcal D_3^{\dagger} &= (g^{-1})^{\dagger} \circ \pa_y \circ g^{\dagger} = \pa_y + (g^{-1})^{\dagger}\pa_yg^{\dagger}.
\end{split}
\end{equation}
Plugging in the expressions \eqref{unitary gauge} in the formulas \eqref{Chernconnection}, we obtain the connection and fields in unitary gauge 
\begin{equation}\label{unitary connection}
\begin{split}
A_z &= \sin^2\beta (-(\pa_zg)g^{-1} + g^{-1}B_zg) +\cos^2\beta(g^{-1})^{\dagger}\pa_zg^{\dagger} \\
\phi_z &= \cos\beta\sin\beta (-(\pa_zg)g^{-1} + g^{-1}B_zg - (g^{-1})^{\dagger}\pa_zg^{\dagger} \\
A_y &= \frac{1}{2}(-(\pa_y g)g^{-1} + (g^{-1})^{\dagger}\pa_yg^{\dagger} \\
\phi_1 &= -\frac{i}{2}\cos\beta((\pa_y g)g^{-1} + (g^{-1})^{\dagger}\pa_yg^{\dagger}),
\end{split}
\end{equation}
with $A_{\bar z} = -A_z^{\dagger}$ and $\phi_{\bar z} = -\phi_z^{\dagger}$.

\subsection{The moment map equation} In this subsection we focus on the moment map equation. In different parts of the paper we will study the moment map in different gauges. In order to prove openness it is more convenient to work in unitary gauge while the a-priori estimates for closedness will be done in holomorphic gauge. Here, we restrict to the latter and after producing an ansatz for the metric $H$, we write down the moment map equations for this ansatz. 

In order to explicitly write down the moment map equation, we need to choose $B_z$. Since we want to perturb off of solutions to the extended Bogomolny equations, it is natural to require that in the limit $\beta \to 0$, the moment map equation we obtain is the same as the one appearing in \cite[][Section 2]{MH2}. Undoubtedly, the simplest possible way of achieving this is to choose 
$B_z = \csc\beta \begin{pmatrix} 0 & z^k \\ 0 & 0 \end{pmatrix}$. With this particular choice, \eqref{metricmoment} becomes

\begin{equation}\label{explicitmoment}
\begin{split}
4\cos^2\beta~ \pa_{\bar z} H^{-1}\pa_z H & + 4\sin^2\beta ~\left(\pa_z +\csc\beta 
\begin{pmatrix} 0 & z^k \\ 0 & 0 \end{pmatrix}\right)H^{-1} 
\left(\pa_{\bar z} -\csc\beta \begin{pmatrix} 0 & 0 \\ \bar z^k & 0 \end{pmatrix}\right)  H \\
& + \pa_yH^{-1}\pa_y H =0.
\end{split}
\end{equation}
Following the notation of \cite[][Appendix C]{GW}, we express $H$ as

\begin{equation}\label{metric}
H = \begin{pmatrix} Y^{-1} & -Y^{-1}\Sigma \\ -Y^{-1}\bar\Sigma & Y + Y^{-1}|\Sigma|^2 \end{pmatrix},
\end{equation}
where $Y$ is a positive and $\Sigma$ a complex-valued function. Then 
\begin{equation*}
H^{-1} = \begin{pmatrix} Y + Y^{-1}|\Sigma|^2 & Y^{-1}\Sigma \\ Y^{-1}\bar\Sigma &Y^{-1} \end{pmatrix}.
\end{equation*}
Using these in \eqref{metricmoment} yields a system of partial differential equations:

\begin{equation}\label{momentsystem}
\begin{split}
\omega^{i}(\partial_i(Y^{-1}\bar\pa_i Y) + Y^{-2}\bar{\partial_i\Sigma}\partial_i\Sigma) + 4|z|^{2k}Y^{-2} - 2\sin\beta \, Y^{-2}(\bar z^k\pa_z\Sigma + z^k\pa_{\bar z}\bar\Sigma) &= 0 \\
\omega^{i}\partial_i(Y^{-2}\bar\partial_i\bar\Sigma) - 4\sin\beta \, \bar z^k\pa_zY^{-2} &= 0,
\end{split}
\end{equation}
where $\omega^1 = 4\cos^2\beta$, $\omega^2 = 4\sin^2\beta$ and $\omega^3 = 1$. 

As presented in \eqref{momentsystem}, this system is not easy to analyze.  Fortunately it can be simplified 
by rewriting $Y$ and $\Sigma$ in a suitable form. For the charge $k$ model solution when $\beta =0$, 
$\Sigma \equiv 0$ and $Y = yr^ke^{u(\sigma)}$, where $r=|z|$, $\sigma := y/r$ and $u$ is real-valued
and satisfies
\[
u(\sigma) \to 1~\text{as}~\sigma \to 0, \quad u(\sigma) \sim k \log \sigma ~\text{as}~\sigma\to\infty.
\]

More generally then, we shall write $Y$ in this form, for some function $u = u(\sigma)$ with the same asymptotics.
To determine a good ansatz for the form of $\Sigma$, first observe that if we insert this expression for $Y$ 
into the second equation of \eqref{momentsystem} and expand in the lowest power in $y$, we must have
\[
\Sigma \sim 2\sin\beta \, kz^{k+1}\sigma^2\ \ \mbox{as}\ \ \ y \to 0.
\]
Note also that \eqref{momentsystem} is invariant under dilations $(y,z) \to (\lambda y, \lambda z)$, $\lambda  > 0$,  
as well as under rotations $z\to e^{i\theta}z$. Finally, it is also invariant under reflections $z\to \bar z$,
$\Sigma\to\bar \Sigma$ (the latter is needed only when $\zeta^2\neq 1$).  The Nahm pole boundary  
condition with a charge $k$ knot is invariant this entire set of symmetries. This leads to the ansatz 

%
\begin{equation}\label{ansatz}
Y:= r^{k+1}e^{u(\sigma)}, \quad \Sigma:= \sin\beta \, z^{k+1}v(\sigma),
\end{equation}
with 

\begin{equation}\label{asymptotics}
\begin{split}
&u(\sigma) \sim \log \sigma~\text{as}~\sigma \to 0 \\
&u(\sigma) \sim (k+1) \log\sigma~\text{as}~\sigma\to\infty. \\
&v(\sigma) \sim 2k\sigma^2~\textrm{as}~\sigma \to 0.
\end{split}
\end{equation}
Since $\Sigma$ should be bounded along the axis where $z=0$, we also expect that 
\[
|r^{k+1}v(\sigma)| \leq C,\ \mbox{i.e.,}\ \ |v(\sigma)| \leq C \sigma^{k+1}\ \mbox{as}\ \sigma \to \infty. 
\]
Note that the supposed form of $Y$ is slightly different that the one above; the present form is simpler for computations. 
Inserting \eqref{ansatz} into \eqref{momentsystem} produces the system

\begin{equation}\label{odesystem}
\begin{split}
(\sigma^2+1)u''+ & \sigma u' \\ &
+ e^{-2u}\left[(\textrm{sin}^2\beta(2(k+1)v-\sigma v') -2)^2 + \textrm{sin}^2\beta(\cos^2\beta\sigma^2+1)v'^2\right] = 0\\
(\sigma^2+1)v''+ & \sigma v'  \\ & 
+ 4(k+1)(1-\sin^2\beta(k+1)v) - \left[ 4\sigma(1-\sin^2\beta(k+1)v) +2(\sigma^2+1)v'\right]u' = 0.
\end{split}
\end{equation}
We shall write these equations in a somewhat more tractable form in Section 4, where we analyze them more carefully.

\section{Linear Analysis}

\subsection{The linearized equation}

As the title of this subsection suggests, we start here by linearizing the moment map equation 
\begin{equation}\label{Omega}
\Omega_H := 2i\Lambda(\cos^2\beta\, [\mathcal D_1, \mathcal D_1^{\dagger}]-\sin^2\beta\, 
[\mathcal D_2 , \mathcal D_2^{\dagger}])+[\mathcal D_3 , \mathcal D_3^{\dagger}] =0.
\end{equation}
To do this, we fix a background hermitian metric $H_0$ on the bundle $E$. To linearize \eqref{Omega} around $H_0$, 
we consider variations $H = H_0e^s$ with $s\in \Gamma(\mathbb R^2\times\mathbb R^+, i\mathfrak{su}(E,H_0))$. 
Substituting this $H$ into the equation yields
\begin{proposition}
\begin{equation}\label{Omega}
\Omega_H = \Omega_{H_0} + \gamma(-s)\mathcal L_{H_0}s + Q(s),
\end{equation}
where 
\begin{equation}\label{linearization1}
\mathcal L_{H_0} s := 2i\Lambda(\cos^2\beta \, \mathcal D_1\mathcal D_1^{\dagger_{H_0}} - \sin^2\beta \,
\mathcal D_2\mathcal D_2^{\dagger_{H_0}})s + \mathcal D_3\mathcal D_3^{\dagger_{H_0}}s,
\end{equation}
$\gamma(s) := \frac{e^{\text{ad}_s} -1}{\text{ad}_s}$ and $Q(s)$ is quadratic in $s$. 
\end{proposition}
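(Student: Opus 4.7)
The plan is a direct computation, organized around the observation that the operators $\mathcal{D}_i$ themselves do not depend on $H$, so the entire variation of $\Omega_H$ enters through the adjoints $\mathcal{D}_i^{\dagger_H}$. The first step is to derive a closed-form expression for the difference $\mathcal{D}_i^{\dagger_H} - \mathcal{D}_i^{\dagger_{H_0}}$ when $H = H_0 e^s$. Starting from the defining identity $\partial_\alpha H(\sigma_1,\sigma_2) = H(\mathcal{D}_i^{\dagger_H}\sigma_1,\sigma_2) + H(\sigma_1,\mathcal{D}_i\sigma_2)$ together with the relation $H(\cdot,\cdot) = H_0(\cdot, e^s \cdot)$ and the $H_0$-self-adjointness of $s$, a short manipulation yields
\[
\mathcal{D}_i^{\dagger_H} = \mathcal{D}_i^{\dagger_{H_0}} + \bigl(e^{-s}[\mathcal{D}_i, e^s]\bigr)^{\dagger_H}.
\]
Applying the standard Duhamel identity $\mathcal{D}_i(e^s) = e^s\cdot \gamma(-s)(\mathcal{D}_is)$, where $\gamma(-s) = (1 - e^{-\text{ad}_s})/\text{ad}_s$, recasts the correction as multiplication by the matrix $\gamma(-s)(\mathcal{D}_is)$ followed by an $H$-adjoint.

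Next, I would substitute this into each commutator $[\mathcal{D}_i, \mathcal{D}_i^{\dagger_H}]$ appearing in $\Omega_H$. Since the correction is a zeroth-order multiplication operator, commuting $\mathcal{D}_i$ past it produces multiplication by $\mathcal{D}_i$ applied to the matrix inside. The resulting expansion splits into $[\mathcal{D}_i,\mathcal{D}_i^{\dagger_{H_0}}]$, whose weighted $\Lambda$-combination reproduces $\Omega_{H_0}$, plus a correction of the form $\mathcal{D}_i\bigl((\gamma(-s)(\mathcal{D}_is))^{\dagger_H}\bigr)$. The key algebraic input is the identity $(\mathcal{D}_is)^{\dagger_{H_0}} = \mathcal{D}_i^{\dagger_{H_0}}s$, valid because $s$ is $H_0$-Hermitian, which places $\mathcal{D}_i^{\dagger_{H_0}}s$ inside the outer $\mathcal{D}_i$ and yields the desired $\mathcal{D}_i\mathcal{D}_i^{\dagger_{H_0}}s$. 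Collecting these contributions with the weights $2i\Lambda\cos^2\beta$, $-2i\Lambda\sin^2\beta$, and $1$ assembles precisely $\mathcal{L}_{H_0}s$. The $\gamma(-s)$ prefactor is then inherited from the Duhamel factor in the zeroth-order piece and tracked past the outer $\mathcal{D}_i$ up to commutators that are themselves at least quadratic in $s$ and so are absorbed into $Q(s)$; the remaining terms arising from the conversion between $\dagger_H$ and $\dagger_{H_0}$ via $H = H_0 e^s$ are likewise at least quadratic and contribute to $Q(s)$.

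The main obstacle is the combinatorial bookkeeping in the last step: one must verify that the $\gamma(-s)$ prefactor emerges cleanly as a left-multiplier of $\mathcal{L}_{H_0}s$ rather than remaining entangled with the quadratic remainder. The structural inputs are standard in the Donaldson--Simpson Hermitian Yang--Mills framework: the Duhamel expansion of $d(e^s)$, the rule $[\mathcal{D}_i, L_M] = L_{\mathcal{D}_i(M)}$ for multiplication operators, the identity $(\mathcal{D}_is)^{\dagger_{H_0}} = \mathcal{D}_i^{\dagger_{H_0}}s$ for $H_0$-Hermitian $s$, and the change-of-adjoint formula induced by $H = H_0 e^s$. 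The $\beta$-weighted mixed signature from $\cos^2\beta$ versus $-\sin^2\beta$ is largely cosmetic here, because the decomposition is additive across the three indices $i=1,2,3$; the structural form of the proposition is identical to what one obtains for the usual moment map, with the weights simply being carried along.
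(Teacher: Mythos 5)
Your argument is correct and follows the same overall strategy as the paper: only the adjoints $\mathcal D_i^{\dagger_H}$ vary with $H$, the Duhamel identity supplies the factor $\gamma(-s)$, and the Leibniz rule splits each commutator into the linear term $\mathcal D_i\mathcal D_i^{\dagger_{H_0}}s$ plus a remainder that is at least quadratic. Where you diverge is the change-of-adjoint step. The paper conjugates directly: with $H=H_0e^s$ and $s$ being $H_0$-Hermitian, $\mathcal D_i^{\dagger_H}=e^{-s}\circ\mathcal D_i^{\dagger_{H_0}}\circ e^s$, so the correction is exactly the multiplication operator $e^{-s}(\mathcal D_i^{\dagger_{H_0}}e^s)=\gamma(-s)\mathcal D_i^{\dagger_{H_0}}s$, and the product rule then yields the proposition with the explicit quadratic term $Q(s)=2i\Lambda\bigl(\cos^2\beta\,(\mathcal D_1\gamma(-s))\mathcal D_1^{\dagger_{H_0}}s-\sin^2\beta\,(\mathcal D_2\gamma(-s))\mathcal D_2^{\dagger_{H_0}}s\bigr)+(\mathcal D_3\gamma(-s))\mathcal D_3^{\dagger_{H_0}}s$, which is the form actually used in the fixed-point argument of Section 3.5. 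Your version writes the correction as $\bigl(e^{-s}[\mathcal D_i,e^s]\bigr)^{\dagger_H}=\bigl(\gamma(-s)\mathcal D_i s\bigr)^{\dagger_H}$, which is also a valid identity, but to land on the exact prefactor you need two conversions that your sketch treats only modulo quadratic errors: $\bigl(\gamma(-s)X\bigr)^{\dagger_H}=\gamma(s)\bigl(X^{\dagger_H}\bigr)$ (the adjoint reverses the sign in the $\mathrm{ad}_s$-series) and $X^{\dagger_H}=e^{-\mathrm{ad}_s}\bigl(X^{\dagger_{H_0}}\bigr)$, whose composition $\gamma(s)e^{-\mathrm{ad}_s}=\gamma(-s)$ recovers $\gamma(-s)\mathcal D_i^{\dagger_{H_0}}s$ on the nose. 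As written, your bookkeeping only pins the prefactor down as $1+O(s)$; that does suffice for the proposition as stated, since $(\gamma(-s)-1)\mathcal L_{H_0}s$ can be absorbed into $Q(s)$, but it does not produce the explicit $Q$ that the paper relies on later, and the paper's conjugation route reaches the exact statement in two lines.
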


\begin{proof}
Keeping the $\mathcal D_i$ fixed, $\mathcal D_i^{\dagger_{H}} = \mathcal D_i^{\dagger_{H_0}} + 
e^{-s}(\mathcal D_i^{\dagger_{H_0}}e^s)$. Indeed, 
\begin{align*}
\pa_i H(s',s'') &= H_0(\mathcal D_i^{\dagger_{H_0}} \circ e^s s' , s'') + H_0(e^s s', \mathcal D_i s'') \\
&= H(e^{-s} \circ \mathcal D_i^{\dagger_{H_0}} \circ e^s s' , s'') + H(s',\mathcal D_i s'').
\end{align*}
Using the identity $e^{-s}(\mathcal D_i^{\dagger_{H_0}}e^s) = \gamma(-s)\mathcal D_i^{\dagger_{H_0}}s$ we indeed get 
\begin{align*}
\Omega_H = &\Omega_{H_0} + \gamma(-s)\mathcal L_{H_0}s +\\
&2i\Lambda(\cos^2\beta \, (\mathcal D_1\gamma(-s))\mathcal D_1^{\dagger_{H_0}} - \sin^2\beta \, (\mathcal D_2\gamma(-s))\mathcal D_2^{\dagger_{H_0}})s + (\mathcal D_3\gamma(-s))\mathcal D_3^{\dagger_{H_0}}s.
\end{align*}

\end{proof}

In order to state our next result, we will need to consider adjoints of operators with respect to the usual inner product on forms taking values in the bundle $E$. By this, we mean that if $A$ is an operator and $a$ and $b$ are bundle valued forms, then 

\begin{equation*}
\int \langle Aa, b \rangle_H = \int \langle a ,A^{\star}b\rangle_H.
\end{equation*}

The following lemma first appeared in \cite{MHOpers}. We include a simplified proof here for completeness. The sign diffference between the original result and the one appearing below appears because contrary to \cite{MHOpers} we stick with the unconventional definition of Hermitian adjoints throughout the paper. 

\begin{lemma}
\begin{equation}\label{linearization2}
\mathcal L_{H_0} := -\cos^2\beta~\nabla_1^{\star}\nabla_1 -\sin^2\beta~\nabla_2^{\star}\nabla_2 + \mathcal D_y^2 + \frac{\phi_1^2}{\cos^2\beta} - \frac{1}{2}[\Omega_H, \star]
\end{equation}

where $\nabla_i := \mathcal D_i + \mathcal D_i^{\dagger_{H_0}}$, $\phi_1^2:= [\phi_1, [\phi_1,\star]]$ and $\mathcal D_y$ , $\phi_1$ were defined in \eqref{Chernconnection}.
\end{lemma}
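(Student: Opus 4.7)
My plan is to establish the lemma by a Weitzenb\"ock-type rearrangement of the operator \eqref{linearization1}. For each $i=1,2,3$ I would first decompose
\[
\mathcal D_i \mathcal D_i^{\dagger_{H_0}} = \tfrac{1}{2}\bigl\{\mathcal D_i, \mathcal D_i^{\dagger_{H_0}}\bigr\} + \tfrac{1}{2}\bigl[\mathcal D_i, \mathcal D_i^{\dagger_{H_0}}\bigr].
\]
Summed with the coefficients prescribed in \eqref{Omega}, the antisymmetric pieces collapse to $\tfrac{1}{2}\Omega_H$. Because $s$ takes values in $i\mathfrak{su}(E,H_0)$, the operators $\mathcal D_i$ act on $s$ by commutator, so this antisymmetric contribution assembles, via Jacobi, into $\pm\tfrac{1}{2}[\Omega_H,s]$; once the conjugate-linear-in-first-slot convention $H(s,s')=\bar s^T H s'$ is threaded through, the overall sign lands on the claimed $-\tfrac{1}{2}[\Omega_H,\star]$.

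For the symmetric pieces when $i=1,2$, the TEBE relation $[\mathcal D_i,\mathcal D_i]=0$ combined with the form identities $d\bar z\wedge d\bar z = dz\wedge dz = 0$ forces $\mathcal D_i^2=(\mathcal D_i^{\dagger_{H_0}})^2=0$, and hence $\bigl\{\mathcal D_i,\mathcal D_i^{\dagger_{H_0}}\bigr\}=\nabla_i^2$. Applying $2i\Lambda$ and invoking the K\"ahler identities to trade wedge composition for the formal $L^2$ adjoint then converts $2i\Lambda\,\nabla_i^2$ into $-\nabla_i^*\nabla_i$, weighted respectively by $\cos^2\beta$ and $-\sin^2\beta$, which matches the first two terms of \eqref{linearization2}. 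For $i=3$ I would instead substitute $\mathcal D_3=\mathcal D_y-i\phi_1/\cos\beta$ and $\mathcal D_3^{\dagger_{H_0}}=\mathcal D_y+i\phi_1/\cos\beta$ from \eqref{Chernconnection} directly into $\mathcal D_3\mathcal D_3^{\dagger_{H_0}}$ and expand. The diagonal contributions yield $\mathcal D_y^2+\phi_1^2/\cos^2\beta$ (interpreting $\phi_1^2=[\phi_1,[\phi_1,\star]]$ as in the statement), while the mixed $\pm(i/\cos\beta)\mathcal D_y\phi_1$ terms telescope by Jacobi with the $\tfrac{1}{2}[\mathcal D_3,\mathcal D_3^{\dagger_{H_0}}]$ piece from the antisymmetric part and feed into the $\Omega_H$ commutator.

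The principal bookkeeping obstacle is sign accounting, since three conventions intersect: the unconventional adjoint noted before the lemma, the orientation implicit in $\Lambda(\tfrac{i}{2}dz\wedge d\bar z)=1$, and the anti-commutativity of the $dz$, $d\bar z$ wedges. The sign flip relative to \cite{MHOpers} flagged in the preamble to the lemma stems from the first of these conventions, as anticipated by the author. Once all three are fixed, the identity reduces to a term-by-term match and I expect no conceptual surprises beyond the Weitzenb\"ock-style reassembly just sketched.
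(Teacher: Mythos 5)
Your overall strategy is the same as the paper's: split each $\mathcal D_i\mathcal D_i^{\dagger_{H_0}}$ into a symmetric piece (handled by the K\"ahler identities for $i=1,2$ and by the algebraic relations $\mathcal D_3 = \mathcal D_y - i\phi_1/\cos\beta$, $\mathcal D_3^{\dagger} = \mathcal D_y + i\phi_1/\cos\beta$ for $i=3$) and a commutator piece that reassembles into $\tfrac12\Omega_H$; the paper's displayed identity $\nabla_i^{\star}\nabla_i = -2i\Lambda\,\mathcal D_i\mathcal D_i^{\dagger} + i\Lambda[\mathcal D_i,\mathcal D_i^{\dagger}]$ and its computation $4(\mathcal D_y^2 + \phi_1^2/\cos^2\beta) = 4\mathcal D_3\mathcal D_3^{\dagger} - 2[\mathcal D_3,\mathcal D_3^{\dagger}]$ are exactly your decomposition written the other way around. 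Your treatment of the $i=3$ term is correct and identical in substance.

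However, there is one substantive point that you cannot defer to ``sign accounting,'' because it is the actual content of the lemma: you convert $2i\Lambda\,\nabla_i^2$ into $-\nabla_i^{\star}\nabla_i$ \emph{uniformly} in $i=1,2$ and then apply the weights $\cos^2\beta$ and $-\sin^2\beta$ from \eqref{linearization1}. Read literally, this produces $-\cos^2\beta\,\nabla_1^{\star}\nabla_1 + \sin^2\beta\,\nabla_2^{\star}\nabla_2$, i.e.\ the wrong (indefinite) sign on the second term, not the two negative terms of \eqref{linearization2}. The correct mechanism, visible in the paper's K\"ahler identities \eqref{kahlerid}, is that $\mathcal D_1$ (the $(0,1)$-type operator) and $\mathcal D_2$ (the $(1,0)$-type operator) satisfy identities with \emph{opposite} signs, $i[\Lambda,\mathcal D_1] = -(\mathcal D_1^{\dagger})^{\star}$ versus $i[\Lambda,\mathcal D_2] = +(\mathcal D_2^{\dagger})^{\star}$, and it is precisely this asymmetry that cancels the relative minus sign in \eqref{linearization1} and makes both rough-Laplacian terms come out negative -- which is what later powers the integration-by-parts/kernel argument. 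Two smaller repairs: the symmetric part of $2i\Lambda\,\mathcal D_i\mathcal D_i^{\dagger}$ is $i\Lambda\{\mathcal D_i,\mathcal D_i^{\dagger}\}$, not $2i\Lambda\,\nabla_i^2$; and the identification $\{\mathcal D_i,\mathcal D_i^{\dagger}\} = \nabla_i^2$ followed by ``K\"ahler identities give $-\nabla_i^{\star}\nabla_i$'' must distinguish the Hermitian adjoint acting on $0$-forms from the same operator extended to $(0,1)$-forms (they differ by a sign); otherwise the graded anticommutator of the extended operators is the zeroth-order curvature and cannot equal a second-order Laplacian. This is exactly the subtlety the paper's remark following the lemma is devoted to, so your argument needs that distinction made explicit to close.
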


\begin{proof}

We start by recalling the Kahler identities \cite{Simpson}

\begin{equation}\label{kahlerid}
\begin{split}
i[\Lambda ,\mathcal D_1] &= -(\mathcal D_1^{\dagger_H})^{\star},~i[\Lambda, \mathcal D_1^{\dagger_H}] = \mathcal D_1^{\star}\\
i[\Lambda ,\mathcal D_2] &= (\mathcal D_2^{\dagger_H})^{\star},~i[\Lambda, \mathcal D_2^{\dagger_H}]  = -\mathcal D_2^{\star} \\
(\mathcal D_3^{\dagger_H})^{\star} &= \mathcal D_3.
\end{split}
\end{equation}

The signs are reversed from the usual Kahler identities as written down in \cite{MHOpers, Simpson} because of the unconventional definition of the Hermitian adjoints we use in this paper. Using the K\"ahler identities, we obtain
\begin{equation*}
\begin{split}
\nabla_i^{\star}\nabla_i &= \mathcal D_i^{\star}\mathcal D_i + (\mathcal D_i^{\dagger})^{\star}\mathcal D_i^{\dagger} \\
&= -i\Lambda\mathcal D_i\mathcal D_i^{\dagger} + i\Lambda\mathcal D_i^{\dagger}\mathcal D_i \\
&= -2i\Lambda \mathcal D_i\mathcal D_i^{\dagger} + i\Lambda[\mathcal D_i, \mathcal D_i^{\dagger}].
\end{split}
\end{equation*}
Then, by \eqref{Chernconnection},
\begin{equation*}
\begin{split}
4\left(\mathcal D_y^2+\frac{\phi_1^2}{\cos^2\beta}\right) &= (\mathcal D_3 + \mathcal D_3^{\dagger})^2 - (\mathcal D_3 - \mathcal D_3^{\dagger})^2 \\
&= 2(\mathcal D_3\mathcal D_3^{\dagger} + \mathcal D_3^{\dagger}\mathcal D_3)\\
&= 4\mathcal D_3\mathcal D_3^{\dagger} -2[\mathcal D_3, \mathcal D_3^{\dagger}].
\end{split}
\end{equation*}

Putting these identities together, we obtain the required formula.
\end{proof}

\begin{remark}
Notice that in the derivation of the formula for the $\nabla_i^{\star}\nabla_i $ term, it would seem that the second line is equal to $-i\Lambda[\mathcal D_i, \mathcal D_i^{\dagger}]$. However, in the Kahler identity $i[\Lambda, \mathcal D_1^{\dagger_H}] = \mathcal D_1^{\star}$, $D_1^{\dagger_H}$ denotes the hermitian adjoint acting on zero forms, which differs from the  same operator acting on $(0,1)$-forms by a sign.
\end{remark}

\subsection{Calculations in spherical coordinates} 
The aim of this subsection is to derive explicit formulas for the connection and field coefficients that appear in the linearized equation 
when written in spherical coordinates. The importance of these calculations will become apparent in the next subsection. 
From this point onward, we will be working in unitary gauge. 

Writing the metric \eqref{metric} as 
\begin{equation*}
H = g^{\dagger}g = \begin{pmatrix}Y^{-1/2} & 0 \\ -Y^{-1/2}\bar\Sigma & Y^{1/2} \end{pmatrix}\begin{pmatrix}Y^{-1/2} & -Y^{-1/2}\Sigma \\ 0 & Y^{1/2} \end{pmatrix},
\end{equation*}
and using the identities \eqref{unitary gauge}  and \eqref{unitary connection}, we obtain the following explicit formulas for the operators and the connection and field coefficients in unitary gauge respectively:
\begin{equation} \label{explicit unitary operators}
\begin{split}
\mathcal D_1 = g\circ \pa_{\bar z}\circ g^{-1} &= \pa_{\bar z} + \frac{1}{2}\begin{pmatrix}Y^{-1}\pa_{\bar z}Y &2Y^{-1}\pa_{\bar z}\Sigma \\ 0 & -Y^{-1}\pa_{\bar z}Y \end{pmatrix} \\
\mathcal D_1^{\dagger} = (g^{-1})^{\dagger}\circ \pa_z \circ g^{\dagger} &= \pa_z - \frac{1}{2}\begin{pmatrix}Y^{-1}\pa_z Y & 0 \\ 2Y^{-1}\pa_z \bar\Sigma & -Y^{-1}\pa_z Y \end{pmatrix}\\
\mathcal D_2 = g\circ \left( \pa_z + \csc\beta\begin{pmatrix} 0 & z^k \\ 0 & 0 \end{pmatrix}\right)\circ g^{-1} &= \pa_z + \frac{1}{2}\begin{pmatrix}Y^{-1}\pa_z Y & 2Y^{-1}\pa_z \Sigma + 2\csc\beta Y^{-1}z^k \\ 0 & -Y^{-1}\pa_z Y \end{pmatrix} \\
\mathcal D_2^{\dagger} = (g^{-1})^{\dagger}\circ \left( \pa_{\bar z} - \csc\beta\begin{pmatrix} 0 & 0 \\ \bar z^k & 0 \end{pmatrix}\right)\circ g^{\dagger} &= \pa_{\bar z} - \frac{1}{2}\begin{pmatrix}Y^{-1}\pa_{\bar z} Y & 0 \\ 2Y^{-1}\pa_{\bar z} \bar\Sigma + 2\csc\beta Y^{-1}\bar z^k & -Y^{-1}\pa_{\bar z} Y \end{pmatrix}\\
\mathcal D_3 = g\circ \pa_y \circ g^{-1} &= \pa_y +\frac{1}{2}\begin{pmatrix}Y^{-1}\pa_yY &2Y^{-1}\pa_y\Sigma \\ 0 & -Y^{-1}\pa_yY \end{pmatrix} \\
\mathcal D_3^{\dagger} = (g^{-1})^{\dagger}\circ \pa_y \circ g^{\dagger} &= \pa_y - \frac{1}{2}\begin{pmatrix}Y^{-1}\pa_yY & 0 \\ 2Y^{-1}\pa_y\bar\Sigma & -Y^{-1}\pa_yY \end{pmatrix}
\end{split}
\end{equation}
and 
\begin{equation} \label{explicit unitary connections}
\begin{split}
A_{\bar z} &=  \frac{\cos^2\beta}{2}\begin{pmatrix}Y^{-1}\pa_{\bar z}Y & 2Y^{-1}\pa_{\bar z}\Sigma \\ 0 & -Y^{-1}\pa_{\bar z} Y \end{pmatrix} - \frac{\sin^2\beta}{2}\begin{pmatrix}Y^{-1}\pa_{\bar z}Y & 0 \\ 2Y^{-1}\pa_{\bar z}\bar\Sigma - 2\csc\beta Y^{-1}\bar z^k  & -Y^{-1}\pa_{\bar z}Y \end{pmatrix} \\
\phi_{\bar z} &= \sin\beta\cos\beta \begin{pmatrix}-Y^{-1}\pa_{\bar z}Y & -Y^{-1}\pa_{\bar z}\Sigma \\-Y^{-1}\pa_{\bar z}\bar\Sigma + \csc\beta Y^{-1}\bar z^k  &  Y^{-1}\pa_{\bar z}Y \end{pmatrix}\\
A_z &= \frac{\sin^2\beta}{2}\begin{pmatrix}Y^{-1}\pa_zY &2Y^{-1}\pa_z\Sigma - 2\csc\beta Y^{-1}z^k  \\ 0 & -Y^{-1}\pa_zY \end{pmatrix} - \frac{\cos^2\beta}{2}\begin{pmatrix}Y^{-1}\pa_zY & 0 \\ 2Y^{-1}\pa_z \bar\Sigma & -Y^{-1}\pa_z Y \end{pmatrix} \\
\phi_z &= \sin\beta\cos\beta \begin{pmatrix}Y^{-1}\pa_zY &Y^{-1}\pa_z\Sigma- \csc\beta Y^{-1}z^k \\ Y^{-1}\pa_z\bar\Sigma & - Y^{-1}\pa_zY \end{pmatrix}\\
A_y &= \frac{1}{2}\begin{pmatrix} 0 & Y^{-1}\pa_y\Sigma \\ -Y^{-1}\pa_y\bar\Sigma & 0 \end{pmatrix}\\
\phi_1 &= \frac{i\cos\beta}{2}\begin{pmatrix}Y^{-1}\pa_yY & Y^{-1}\pa_y\Sigma \\ Y^{-1}\pa_y\bar\Sigma & -Y^{-1}\pa_yY \end{pmatrix}.
\end{split}
\end{equation}

The first step is to express the linearized operator in Euclidean coordinates. This allows us to read off the non-connection 
part of the linearized operator with relative ease. To perform this calculation, it is convenient to use the Euclidean coordinates $x_2,x_3$ 
where $z:= x_2+ix_3$. We compute each term in \eqref{linearization2} explicitly.
\begin{align*}
 \mathcal D_1 + \mathcal D_1^{\dagger} &= (\pa_{\bar z}+A_{\bar z}- \phi_{\bar z}\tan\beta) \,d\bar z +  (\pa_z+A_z- \phi_z\tan\beta) \,d z  \\
&= (\pa_2+A_2 - \phi_2\tan\beta) \,dx_2 +(\pa_3+A_3 - \phi_3\tan\beta) \,dx_3  \\
&= (D_2 - \phi_2\tan\beta) \,dx_2 +  (D_3 - \phi_3\tan\beta) \,dx_3,
\end{align*}
where $\phi_2\,dx_2 + \phi_3\,dx_3 = \phi_z\,dz + \phi_{\bar z}\,d\bar z$. The adjoint operator  $\nabla_1^{\star}$ is equal to
\begin{align*}
 \nabla_1^{\star} &= -(\pa_2+A_2 - \phi_2\tan\beta) \,\langle dx^2, \star \rangle - (\pa_3+A_3 - \phi_3\tan\beta) \,\langle dx^3, \star \rangle  \\
&= -(D_2 - \phi_2\tan\beta) (\,dx^2)^{\star} -  (D_3 - \phi_3\tan\beta) (\,dx^3)^{\star}
\end{align*}
since in unitary gauge the metric is just the usual inner product. Therefore, 
\begin{equation*}
\nabla_1^{\star}\nabla_1 = - (D_2 - \tan\beta\, \phi_2)(D_2 - \tan\beta\, \phi_2) - (D_3 - \tan\beta\, \phi_3)(D_3 - \tan\beta\, \phi_3).
\end{equation*}
Similarly one obtains the equality 
\[
\nabla_2^{\star}\nabla_2 = -  (D_2 + \cot\beta\, \phi_2)(D_2 + \cot\beta \, \phi_2) - (D_3 + \cot\beta\, \phi_3)(D_3 + \cot\beta \, \phi_3).
\]
Plugging these formulae into the linearized operator and expanding, we obtain
\begin{equation*}
\mathcal L_H = D_2^2 + D_3^2 + D_y^2 + \frac{[\phi_1,[\phi_1,\star]]}{\cos^2\beta}+ [\phi_2,[\phi_2,\star]] + [\phi_3,[\phi_3,\star]].
\end{equation*}

Now define
\begin{equation*}
\frac{N}{\rho^2}:= \frac{[\phi_1,[\phi_1,\star]]}{\cos^2\beta}+ [\phi_2,[\phi_2,\star]] + [\phi_3,[\phi_3,\star]]. 
\end{equation*}
Using \eqref{ansatz}, the operator $N$ is independent of $\rho$. In fact, we show later that $N$ has a particularly 
special form.

We are now in a position to express $\mathcal L_H$ in spherical coordinates $(\rho, \psi, \theta)$, where $y = \rho \cos \psi$ and
$z = x_2 + i x_3 = \sin \psi \rho e^{i\theta}$. This takes the form
\begin{equation}\label{spherical linearization}
\mathcal L_H = D_{\rho}^2+\frac{2}{\rho}D_{\rho} +\frac{1}{\rho^2}(D_{\psi}^2+\cot\psi D_{\psi} + \frac{1}{\sin^2\psi}D_{\theta}^2) + N.
\end{equation}
To compute the connection coefficients $A_{\rho}$, $A_{\psi}$ and $A_{\theta}$, insert \eqref{explicit unitary operators} into 
\eqref{linearization2} to get the explicit formula
\begin{equation}
\begin{split}
\mathcal L_H = \cos^2\beta &\left(\pa_{x_1}+\frac{1}{2}\begin{pmatrix}i\sin\theta\, Y^{-1}\pa_rY & Y^{-1}\pa_{\bar z}\Sigma \\ -Y^{-1}\pa_z\bar\Sigma & -i\sin\theta\, Y^{-1}\pa_rY\end{pmatrix}\right)^2 \\ 
\cos^2\beta &\left(\pa_{x_2} - \frac{i}{2}\begin{pmatrix}\cos\theta\, Y^{-1}\pa_rY & Y^{-1}\pa_{\bar z}\Sigma \\ Y^{-1}\pa_z\bar\Sigma & -\cos\theta\, Y^{-1}\pa_rY\end{pmatrix}\right)^2 \\
 \sin^2\beta &\left(\pa_{x_1}+\frac{1}{2}\begin{pmatrix}-i\sin\theta\, Y^{-1}\pa_rY & Y^{-1}\pa_z\Sigma + \csc\beta Y^{-1}z^k\\ -Y^{-1}\pa_{\bar z}\bar\Sigma - \csc\beta Y^{-1}\bar z^k & i\sin\theta\, Y^{-1}\pa_rY\end{pmatrix}\right)^2 \\
 \sin^2\beta &\left(\pa_{x_2}+\frac{i}{2}\begin{pmatrix}\cos\theta\, Y^{-1}\pa_rY & Y^{-1}\pa_z\Sigma + \csc\beta Y^{-1}z^k\\ Y^{-1}\pa_{\bar z}\bar\Sigma + \csc\beta Y^{-1}\bar z^k & -\cos\theta\, Y^{-1}\pa_rY\end{pmatrix}\right)^2\\
 & \left(\pa_y + \frac{1}{2}\begin{pmatrix} 0 & Y^{-1}\pa_y\Sigma \\ -Y^{-1}\pa_y\bar\Sigma & 0 \end{pmatrix}\right)^2 - \frac{1}{4}  \left(\begin{pmatrix}Y^{-1}\pa_yY & Y^{-1}\pa_y\Sigma \\ Y^{-1}\pa_y\bar\Sigma & -Y^{-1}\pa_yY \end{pmatrix}\right)^2.
\end{split}
\end{equation}

Next, use the identities
\begin{equation} \label{change of variables}
\begin{split}
\pa_{x_2} &= \sin\psi\cos\theta\,\pa_{\rho} -\frac{\sin\theta\,}{\rho\sin\psi}\pa_{\theta} + \frac{\cos\psi\cos\theta}{\rho}\pa_{\psi}\\
\pa_{x_3} &= \sin\psi\sin\theta\,\pa_{\rho} +\frac{\cos\theta}{\rho\sin\psi}\pa_{\theta} + \frac{\cos\psi\sin\theta}{\rho}\pa_{\psi}\\
\pa_y &= \cos\psi\pa_{\rho} - \frac{\sin\psi}{\rho}\pa_{\psi}
\end{split}
\end{equation}
to find the coefficients of the partial derivatives $\pa_{\rho},~\pa_{\psi},~\pa_{\theta}$. 
After some calculation, we obtain that the coefficient of $\pa_{\rho}$ equals 
\begin{equation*}
\begin{pmatrix}0 & A \\ -\bar A & 0\end{pmatrix} +\frac{2}{\rho},
\end{equation*}
where
\begin{equation}
A = \frac{1}{2}\left[\sin\psi\left(\cos^2\beta e^{-i\theta}Y^{-1}\pa_{\bar z}\Sigma + \sin^2\beta e^{i\theta}(Y^{-1}\pa_z\Sigma + \csc\beta Y^{-1}z^k)\right) + \cos\psi Y^{-1}\pa_y\Sigma\right]. \label{A}
\end{equation}
Similarly, the coefficient of $\pa_{\psi}$ is
\begin{equation*}
\frac{1}{\rho}\begin{pmatrix}0 & B \\ -\bar B & 0\end{pmatrix}+\frac{\cot\psi}{\rho^2},
\end{equation*}
where
\begin{equation}
B = \frac{1}{2}\left[\cos\psi\left(\cos^2\beta e^{-i\theta}Y^{-1}\pa_{\bar z}\Sigma + \sin^2\beta e^{i\theta}(Y^{-1}\pa_z\Sigma + \csc\beta Y^{-1}z^k)\right) - \sin\psi Y^{-1}\pa_y\Sigma\right], \label{B}
\end{equation}
and the coefficient of $\pa_{\theta}$ is
\begin{equation*}
\frac{1}{\rho\sin\psi}\begin{pmatrix}C & D \\ -\bar D & -C\end{pmatrix},
\end{equation*}
where 
\begin{equation}
\begin{split}
C &= -\frac{iY^{-1}\pa_rY}{2} \\
D &= -\frac{i}{2}\left[\cos^2\beta e^{-i\theta}Y^{-1}\pa_{\bar z}\Sigma + \sin^2\beta e^{i\theta}(Y^{-1}\pa_z\Sigma + \csc\beta Y^{-1}z^k)\right]. 
\end{split}
\end{equation}

These calculations imply that 
\begin{equation}\label{spherical connections}
A_{\rho} = \begin{pmatrix}0 & A \\ -\bar A & 0\end{pmatrix}, \quad A_{\psi} = 
\rho \begin{pmatrix}0 & B \\ -\bar B & 0\end{pmatrix}, \qquad A_{\theta} = r\begin{pmatrix}C & D \\ -\bar D & -C\end{pmatrix}.
\end{equation}

\bigskip

\subsection{The ansatz for the linearized problem}
As we have discussed above, the hope is that the perturbation arguments of this section will be able to produce a solution that is in the form \eqref{ansatz}. One way to do this is to show that the linearized operator \eqref{spherical linearization} is invertible between appropriate Banach spaces, and then, since the solution we would get is unique, we could argue that it must satisfy the required symmetries. Although this is arguably the more natural approach, we employ a different one here. We require that the section $s$ in unitary gauge have the form 
\begin{equation}\label{ansatz2}
s^{U} = \begin{pmatrix} \gamma & e^{i(k+1)\theta}\delta \\ e^{-i(k+1)\theta}\delta & -\gamma \end{pmatrix},
\end{equation}
where $\gamma$ and $\delta$ are real functions of $\psi$. We use the upper $U$ notation to keep track of when we work with the section in unitary gauge. The reason we seek for a solution with this particular form is because it preserves the ansatz \eqref{ansatz}. Below, we briefly demonstrate why this is the case. We calculate
\begin{equation*}
e^{s^{U}} = \sum\limits_{n = 0}^{\infty} \frac{(\gamma^2 + \delta^2)^n}{(2n)!} \text{Id} + \sum\limits_{n = 0}^{\infty} \frac{(\gamma^2 + \delta^2)^n}{(2n+1)!}s^{U}.
\end{equation*}
In order to go back to holomorphic gauge and compute the new hermitian metric $H = H_0e^s$, we need to understand how $s$ transforms when we move from unitary to holomorphic gauge. Since unitarity is preserved, $s$ must be equal to $g^{-1} s^{U} g$ where  
\[g = \begin{pmatrix}Y^{-1/2} & -Y^{-1/2}\Sigma \\ 0 & Y^{1/2} \end{pmatrix},\]
is such that $H_0 = g^{\dagger}g$ as in section $2$. Therefore, the new metric is going to be 
\[H_0e^s = g^{\dagger}e^{s^{U}}g.\]
From this we get that 
\begin{equation*}
\begin{split}
Y &= Y_0\left( \sum\limits_{n = 0}^{\infty} \frac{(\gamma^2 + \delta^2)^n}{(2n)!} + \sum\limits_{n = 0}^{\infty} \frac{(\gamma^2 + \delta^2)^n}{(2n+1)!}\right)^{-1}\\
\Sigma &= \Sigma_0 - Y_0e^{i(k+1)\theta}\delta\sum\limits_{n = 0}^{\infty} \frac{(\gamma^2 + \delta^2)^n}{(2n+1)!}\left( \sum\limits_{n = 0}^{\infty} \frac{(\gamma^2 + \delta^2)^n}{(2n)!} + \sum\limits_{n = 0}^{\infty} \frac{(\gamma^2 + \delta^2)^n}{(2n+1)!}\right)^{-1}.
\end{split}
\end{equation*}
From these formulas it is clear that $Y$ is of the form \eqref{ansatz}. As for $\Sigma$, it is enough to notice that $Y_0 = r^{k+1}F(\psi)$ and therefore $Y_0e^{i(k+1)\theta} = z^{k+1}F(\psi)$ so this term is also of the form \eqref{ansatz}. 

\begin{remark}
This calculation, on top of proving that sections of this form preserve \eqref{ansatz}, shows the exact relation between the section $s^U$ and the functions $u$ and $v$, which are the main objects of study in the next section, thus providing a bridge between sections $3$ and $4$ of this paper.
\end{remark}

In order to be able to consider a section of this form, the first thing we need to check is that the linearized equation is algebraically consistent. By this we mean that the error term should have the same algebraic form as $\mathcal L_H s$.  Thus, a good place to start, is to calculate the error term that is produced when we plug in an exact solution for $\beta_0$ into the equations \eqref{TEBE} for $\beta_1$ close to $\beta_0$. Working in unitary gauge, we write 
\begin{align*}
\mathcal D_1^{\beta_1} &= D_1^{\beta_0} - (\tan\beta_1 - \tan\beta_0)\phi_{\bar z} \\
\mathcal D_2^{\beta_1} &= D_2^{\beta_0} - (\tan\beta_1 - \tan\beta_0)\phi_z \\
\mathcal D_3^{\beta_1} &= D_3^{\beta_0} - i(\frac{1}{\cos\beta_1} - \frac{1}{\cos\beta_0})\phi_1.
\end{align*}

We will analyze only the error coming from the $\mathcal D_1$ term since the others lead to similar expressions. We want to show that the error term has the form \eqref{ansatz2} times $\rho^{-2}$. Defining $\epsilon = \tan\beta_1 - \tan\beta_0$, we write 
\begin{equation*}
\begin{split}
[\mathcal D_1^{\beta_1}, \mathcal D_1^{\beta_1,\dagger}] &= [\mathcal D_1^{\beta_0}, \mathcal D_1^{\beta_0,\dagger}] + \epsilon\left([\mathcal D_1^{\beta_0,\dagger},\phi_{\bar z}] - [\mathcal D_1^{\beta_0},\phi_z]\right)+ \epsilon^2[\phi_{\bar z},\phi_z]\\
&= [\mathcal D_1^{\beta_0}, \mathcal D_1^{\beta_0,\dagger}] + \epsilon\left(\pa_{\bar z}\phi_z - \pa_z\phi_{\bar z} + [A_{\bar z},\phi_z] - [A_z,\phi_{\bar z}]\right) + \epsilon^2[\phi_{\bar z},\phi_z].
\end{split}
\end{equation*}

At this point, notice that the matrices $\phi_z$ and $A_z$ given in \eqref{explicit unitary connections} are of the form 
\begin{equation*}
\frac{e^{-i\theta}}{\rho}\begin{pmatrix}a & e^{i(k+1)\theta}b \\ e^{-i(k+1)\theta}c & -a \end{pmatrix},
\end{equation*}
where $a,b,c$ are real functions of $\psi$, while $\phi_{\bar z}$ and $A_{\bar z}$ are of the form 
\begin{equation*}
\frac{e^{-i\theta}}{\rho}\begin{pmatrix}a & e^{i(k+1)\theta}b \\ e^{-i(k+1)\theta}c & -a \end{pmatrix}.
\end{equation*}
Combining these expressions with the formulas $\pa_z = \frac{e^{-i\theta}}{2}\left(\pa_r - \frac{i}{r}\pa_{\theta}\right)$ and $\pa_{\bar z} = \frac{e^{i\theta}}{2}\left(\pa_r + \frac{i}{r}\pa_{\theta}\right)$, a standard calculation gives the desired result. 

Now that we know that the error term has this form and we require that $s$ have the same form, we need to show that the linearized operator preserves this form as well. The key is that the commutator of two matrices of the form \eqref{ansatz2} is of the same form. Since the connection coefficients $A_{\rho}$ and  $A_{\psi}$ are of this form, the following part of the linearized operator preserves the required form
\begin{equation*}
D_{\rho}^2+\frac{2}{\rho}D_{\rho} +\frac{1}{\rho^2}(D_{\psi}^2+\cot\psi D_{\psi}).
\end{equation*}
Next, we see that for $s$ as in \eqref{ansatz2},
\begin{align*}
D_{\theta}s  &= \begin{pmatrix} 0 & i(k+1)e^{i(k+1)\theta}y \\ -i(k+1)e^{-i(k+1)\theta}y & 0 \end{pmatrix} + A_{\theta}s \\
&= \left(\frac{1}{2}\begin{pmatrix} i(k+1) & 0 \\ 0 & -i(k+1) \end{pmatrix} +A_{\theta}\right)s\\
&=: \hat A_{\theta}s,
\end{align*}
where $\hat A_{\theta}$ is has the required form up to a factor of $i$. Therefore $D_{\theta}^2 s$ also has the required form. $\phi_1$ is $i$ times a matrix of the form \eqref{ansatz2} and therefore the only term left to check is 
\begin{equation*}
[\phi_2,[\phi_2,\star]] + [\phi_3,[\phi_3,\star]].
\end{equation*}
We can re-write this sum as
\begin{equation*}
 [\phi_2,[\phi_2,\star]] + [\phi_3,[\phi_3,\star]] = -2\left([\phi_z,[\phi_z^{\dagger},\star]]+[\phi_z^{\dagger},[\phi_z,\star]]\right)
\end{equation*}
and use the expressions appearing in \eqref{explicit unitary connections} to get the required result. Therefore, we have showed that the linearized operator should at least in principle admit a solution of the form \eqref{ansatz2}. Proving that this is indeed the case, is the subject of the next subsection. 

\bigskip

\subsection{Mapping properties} Finally, we are in position to prove the openness part for the continuity argument. The linearized operator whose mapping properties we want to study is 
\begin{equation}\label{final linearization}
\Phi:= D_{\psi}^2 + \cot\psi D_{\psi} + N +\hat A_{\rho}^2 + 2\hat A_{\rho} + \frac{1}{\sin^2\psi}\hat A_{\theta}^2,
\end{equation}
where $\hat A_{\rho}:= \rho A_{\rho}$ is independent of $\rho$ and $\hat A_{\theta}$ was defined in the previous subsection. This operator turns out to be a regular-singular operator acting on sections of the form \eqref{ansatz2}. Therefore, in order to understand its mapping properties we need to calculate its indicial roots at the singular regions and introduce appropriate Banach spaces adapted to the degeneracy of this operator. 

We start with the calculation of the indicial roots of \eqref{final linearization}. We should expect singular behavior at $\psi = 0$ and $\psi =: \pi/2 -\omega = \pi/2$. This differs from the study of the indicial behavior of similar operators as in \cite{MW2} where they only have indicial behavior at $\omega = 0$. In order to produce solutions of the form \eqref{ansatz}, we considered $s$ of the form \eqref{ansatz2} and as a result, we traded the dependence in $\theta$ with the appearance of indicial behavior at $\psi = 0$. 

\begin{lemma}
The set of indicial roots of $\Phi$ at $\omega = 0$ is $\{-1,2\}$, in accordance with the Nahm pole boundary condition, and the set of indicial roots at $\psi = 0$ is $\{-k-1,0,0,k+1\}$. 
\end{lemma}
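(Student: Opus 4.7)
The strategy is to freeze the operator $\Phi$ at its leading order near each singular locus and solve the associated indicial equation for solutions of the form $s=\psi^\nu f$ (respectively $s=\omega^\nu f$) within the ansatz \eqref{ansatz2}. The essential observation is that under the ansatz the diagonal component $\gamma$ (which is $\theta$-independent) and the off-diagonal component $\delta$ (which carries the $e^{\pm i(k+1)\theta}$ phase) transform differently under the $\theta$-rotation encoded in $\hat A_\theta$, so the indicial polynomial at $\psi=0$ decomposes into a $\gamma$-piece and a $\delta$-piece; at $\omega=0$, by contrast, the Nahm-pole Casimir acts uniformly on both components and yields a single indicial polynomial.

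At $\psi=0$, the first task is to check that the remaining terms $\hat A_\rho^2+2\hat A_\rho$ and $N$ are regular as $\psi \to 0$. Using the asymptotics $u(\sigma)\sim (k+1)\log\sigma$ and $|v(\sigma)|\le C\sigma^{k+1}$ as $\sigma=\cot\psi\to\infty$, one shows $Y\sim y^{k+1}$ and that the expressions $A$, $B$ in \eqref{A}, \eqref{B} have a $\sin\psi\cdot O(1/r)$ structure whose $1/r=1/(\rho\sin\psi)$ singularity is absorbed by the $\sin\psi$ prefactor, while the $\csc\beta\, Y^{-1}z^k$ piece decays because $Y^{-1}z^k\to 0$ as $\sigma\to\infty$; similarly each $\phi_i$ decays as $y\to\infty$, so $N$ stays bounded. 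Hence the only singular pieces at $\psi=0$ are $D_\psi^2+\cot\psi\, D_\psi\sim \partial_\psi^2+\psi^{-1}\partial_\psi$ and $\sin^{-2}\psi\,\hat A_\theta^2\sim \psi^{-2}\hat A_\theta^2$. On the ansatz \eqref{ansatz2}, the $A_\theta$-contribution in $\hat A_\theta=\tfrac12\mathrm{diag}(i(k+1),-i(k+1))+A_\theta$ is $O(r)=O(\rho\sin\psi)$ and thus subleading, so $\hat A_\theta^2$ contributes $-(k+1)^2$ on the off-diagonal $\delta$-part and $0$ on the $\theta$-independent diagonal $\gamma$-part. Substituting $s=\psi^\nu f$ decouples into
\begin{equation*}
\nu^2\gamma=0,\qquad \bigl[\nu^2-(k+1)^2\bigr]\delta=0,
\end{equation*}
producing the indicial set $\{-k-1,0,0,k+1\}$.

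At $\omega=0$, with $\omega=\pi/2-\psi$, one has $\cot\psi=\tan\omega=O(\omega)$ regular and $\sin^{-2}\psi=\sec^2\omega$ bounded; the analogous asymptotic analysis using $u(\sigma)\sim \log\sigma$ and $v(\sigma)\sim 2k\sigma^2$ as $\sigma\to 0$ shows that $\hat A_\rho$ and $\hat A_\theta$ remain bounded, so the entire singular behavior is concentrated in $N$. Reading off \eqref{explicit unitary connections} with the asymptotics $Y\sim y r^k$, $\Sigma\sim 2k\sin\beta\, z^{k+1}y^2/r^2$, one identifies the dominant $1/y$ singularities: $\phi_1$ is dominated by its diagonal from $Y^{-1}\partial_yY\sim 1/y$, while $\phi_z$ and $\phi_{\bar z}$ are dominated by the $\csc\beta\, Y^{-1}z^k$ and $\csc\beta\, Y^{-1}\bar z^k$ terms, respectively. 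Packaging these leading matrices and carrying out the double-commutator calculation with the Casimir identity in the adjoint representation yields
\begin{equation*}
\frac{N}{\rho^2}\, s \;\sim\; -\frac{2}{y^2}\, s \;=\; -\frac{2}{\rho^2\omega^2}\, s
\end{equation*}
on both the $\gamma$ and $\delta$ components. The resulting indicial equation $\nu(\nu-1)-2=0$ has roots $\{-1,2\}$, in agreement with the Nahm pole condition.

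The main technical obstacle is the asymptotic bookkeeping at each boundary --- namely, verifying that $\hat A_\rho^2+2\hat A_\rho$ and the non-leading pieces of $N$ really are subleading relative to $\psi^{-2}\hat A_\theta^2$ (at $\psi=0$) or to the $1/y^2$ Nahm-pole coefficient (at $\omega=0$), and that the $\csc\beta$ factors in the $Y^{-1}z^k$ terms produce the expected Nahm-pole generator at $\omega=0$ while being harmless at $\psi=0$. Once these regularity/dominance checks are performed, the indicial equations above follow by direct substitution and the announced root sets are immediate.
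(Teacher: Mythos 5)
Your analysis at $\psi = 0$ is essentially the paper's: there the $\csc\beta\, Y^{-1}z^k$ terms decay (since $Y\sim y^{k+1}$ while $z^k \to 0$), all connection and field matrices in \eqref{spherical connections} are regular, and the only singular contribution is the constant diagonal part of $\hat A_\theta$, giving $\nu^2 = 0$ on the $\gamma$-component and $\nu^2 = (k+1)^2$ on the $\delta$-component, hence $\{-k-1,0,0,k+1\}$. (One small caveat: you dismiss the $A_\theta$ correction as ``$O(r)$'', but its entries $rC$, $rD$ are a priori $O(1)$; what actually makes it vanish at $\psi=0$ is the cancellation $k+1-\sigma u'(\sigma)\to 0$ together with the decay of $D$.)

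The $\omega = 0$ half, however, has a genuine gap. You claim that $\hat A_\rho$ and $\hat A_\theta$ stay bounded there and that the entire $1/\omega^2$ behavior is carried by $N$. This is false for $\beta \neq 0$: the same $\csc\beta\, Y^{-1}z^k$ terms that dominate $\phi_z,\phi_{\bar z}$ also sit in $A_z, A_{\bar z}$ with weight $\sin^2\beta$ (see \eqref{explicit unitary connections}), hence in the quantities $A$ and $D$ of \eqref{A} and \eqref{spherical connections}; near $\omega = 0$ the prefactor $\sin\psi \to 1$ does not suppress them (unlike at $\psi = 0$), and since $Y^{-1}z^k \sim e^{ik\theta}/(2y)$ there, both $\hat A_\rho$ and $\hat A_\theta$ blow up like $\sin\beta/\omega$. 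Consequently $N$ alone does \emph{not} produce $-2\,\mathrm{Id}/\omega^2$: carrying out the double commutators with only $\phi_1$, $\phi_z$, $\phi_{\bar z}$ gives a $\beta$-dependent coefficient (with the correct Nahm-pole normalization $e^{u}\sim 2\sigma$, i.e.\ $Y\sim 2r^k y$, one finds $-2\cos^2\beta$ on the diagonal component and $-(1+\cos^2\beta)$ on the off-diagonal one). The value $-2$ and its $\beta$-independence emerge only after adding the singular parts of the gauge-field terms (equivalently the $\mathrm{ad}_{A_z}$, $\mathrm{ad}_{A_{\bar z}}$ double commutators entering $\hat A_\rho^2$ and $\sin^{-2}\psi\,\hat A_\theta^2$), whose $\sin^2\beta$ weight combines with the $\cos^2\beta$ from the Higgs fields via $\cos^2\beta+\sin^2\beta=1$. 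A secondary issue: you use $Y\sim r^k y$, whereas the normalization forced by the first integral and proved in Lemma~\ref{lemma6} is $e^u\sim 2\sigma$; with your normalization even the full computation would not return the coefficient $-2$, so the constant in the boundary asymptotics must be pinned down before the roots $\{-1,2\}$ can be read off.
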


\begin{remark}
The reason that there are four indicial roots at $\psi = 0$ instead of two as one would expect from a second order equation, is because we work with a matrix-valued function $s$ and therefore we are actually dealing with a system of equations. There are only two indicial roots at $\omega = 0$ because the system asymptotically decouples in that region. 
\end{remark}

\begin{proof}

We start with the indicial behavior at $\omega = 0$. Plugging the asymptotics \eqref{asymptotics} into \eqref{spherical connections} we find that the indicial operator is given by 
\begin{equation*}
\pa_{\omega}^2 - \frac{2\text{Id}}{\omega^2}
\end{equation*}
and therefore the indicial roots are $-1$ and $2$. This calculation should come as no surprise. These are the indicial roots that characterize the Nahm pole solutions for $SU(2)$ away from the knot and therefore, they should not depend on $\beta$. 

Next, we study the indicial behavior at $\psi = 0$. In this region, the indicial operator takes the form 
\begin{equation*}
\pa_{\psi}^2 + \frac{1}{\psi}\pa_{\psi} - \frac{(k+1)^2\text{Pr}_{\text{Ant}}}{\psi^2},
\end{equation*}
where $\text{Pr}_{\text{Ant}}$ is projection to the anti-diagonal 
\begin{equation*}
\text{Pr}_{\text{Ant}}\begin{pmatrix} x & y \\ \bar y & -x \end{pmatrix} = \begin{pmatrix} 0 & y \\ \bar y & 0 \end{pmatrix}.
\end{equation*}
In order to see why this is the case, notice that as $\psi \to 0$ the asymptotics \eqref{asymptotics} imply that all of the matrices \eqref{spherical connections} and all of the fields $\phi_i$ are regular. Therefore, the part of the operator that contributes to the indicial behavior is 
\[ \pa_{\psi}^2 + \cot \psi \pa_{\psi} + \frac{1}{4}\begin{pmatrix} i(k+1) & 0 \\ 0 & -i(k+1) \end{pmatrix}^2.\]
The last term should be interpreted as the operation of conjugating with the matrix 
\[\frac{1}{2}\begin{pmatrix} i(k+1) & 0 \\ 0 & -i(k+1) \end{pmatrix}\]
twice. Keeping only the lowest terms in $\psi$ we indeed get the required formula for the indicial operator. As a result, the indicial roots are $\pm (k+1)$ and $0$, the latter with double multiplicity. 

\end{proof}

Before we state the main theorem of this section, let us define the appropriate function spaces that will appear below. These are designed to be sensitive to the singular behavior of the linearized operator as we will see.
\begin{definition}
We will work with spaces of the form 
\begin{equation*}
\hat\Gamma(S_+^2) := \Gamma_{\text{inv}}(S_+^2, i\mathfrak{su}(E,H)|_{S_+^2}).
\end{equation*}
The $\Gamma$ refers to the function space in which the norm of the section is going to belong, the $\text{inv}$ subscript means that we consider sections of the form \eqref{ansatz2} and $ i\mathfrak{su}(E,H)|_{S_+^2}$ is the restriction of the bundle on the half sphere. The first type of spaces we want to consider is the function spaces of the form
\begin{equation*}
\hat H^k_{0}(S_+^2),
\end{equation*}
where $f\in \hat H^k_{0}(S_+^2)$ if $\omega\psi\pa_{\psi}f \in \hat H^{k-1}_{0}(S_+^2)$, with $\hat H^0_{0}(S_+^2):= \hat L^2(S_+^2)$. The operator $\omega\psi\pa_{\psi}$ is a zero operator at both ends in the sense of \cite{Melliptic}. The second type of spaces we will need is 
\begin{equation*}
\hat C^{k,a}_0(S_+^2),
\end{equation*}
so if $u$ is in such a space, then
\begin{equation*}
\|u\|_{L^{\infty}} + \sup \limits_{i\le k} [(\omega\psi\pa_{\psi})^i u]_{0;0,a} < \infty 
\end{equation*}
where
\begin{equation*}
[u]_{0;0,a} = \sup \limits_{\psi\neq \psi'}\frac{|u(\psi) - u(\psi')|(\psi+\psi')^a}{|\psi-\psi'|^a}.
\end{equation*}
\end{definition}

\begin{theorem}
There is a unique self-adjoint realization 
\begin{equation*}
\Phi :\hat L^2(S_+^2) \rightarrow \hat L^2(S_+^2)
\end{equation*}
which has domain $\omega^2\psi^2\hat H^2_0(S_+^2)$. The operators 
\begin{equation}\label{invertibility}
\begin{split}
\Phi: \omega^{\gamma}\psi^{\delta}\hat H^2_0(S_+^2) &\rightarrow \omega^{\gamma-2}\psi^{\delta-2}\hat L^2(S_+^2)\\
\Phi: \omega^{\mu}\psi^{\nu}\hat C^{2,a}_0(S_+^2) &\rightarrow \omega^{\mu-2}\psi^{\nu-2}\hat C^{0,a}_0(S_+^2)
\end{split}
\end{equation}
are invertible provided that $-1/2 < \gamma < 5/2$, $1/2 < \delta < k+3/2$ and $-1 < \mu < 2$, $0 < \nu < k+1$ respectively.
\end{theorem}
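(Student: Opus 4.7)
The plan is to cast $\Phi$ as a uniformly degenerate (0-elliptic) operator of second order on the compact manifold with corners $S_+^2$, whose two boundary hypersurfaces are $\{\psi = 0\}$ and $\{\omega = 0\}$, and then invoke Mazzeo's general Fredholm theory for such operators. A direct inspection of \eqref{final linearization} together with the asymptotics \eqref{asymptotics} shows that the principal part has the shape $(\omega\psi\partial_\psi)^2 + (\cos\psi\,\hat A_\theta)^2 + \dots$ up to the indicated zero-order terms, so that $\omega^2\psi^2\Phi$ is a 0-differential operator whose 0-principal symbol is elliptic on the 0-cotangent bundle. This is the reason the natural differentiation operator in the definition of $\hat H^k_0$ was chosen to be $\omega\psi\partial_\psi$.

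With 0-ellipticity in place, the general edge-calculus theorem yields that
$$\Phi : \omega^{\gamma}\psi^{\delta}\hat H^2_0(S_+^2) \longrightarrow \omega^{\gamma-2}\psi^{\delta-2}\hat L^2(S_+^2)$$
is Fredholm provided the indicial operators at both boundaries are invertible on the Mellin rays associated to $\gamma$ and $\delta$. Using the preceding lemma and the standard $L^2$-weight shift of $1/2$ at each boundary, the forbidden weights are $\gamma \in \{-1/2, 5/2\}$ at $\omega=0$ and $\delta \in \{-k-1/2, 1/2, 1/2, k+3/2\}$ at $\psi=0$. The stated ranges $-1/2<\gamma<5/2$ and $1/2<\delta<k+3/2$ are precisely the Fredholm strips adjacent to $\omega^2\psi^2\hat L^2$, and the Hölder version of \eqref{invertibility} follows from the analogous statement for the 0-Hölder spaces (no $1/2$-shift), giving the ranges $-1<\mu<2$ and $0<\nu<k+1$.

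For the self-adjointness claim, I would first observe that $\Phi$ is formally symmetric on $\hat L^2(S_+^2)$ since $\mathcal L_{H_0}$ from \eqref{linearization2} is symmetric in the Hermitian pairing and all reductions used to arrive at \eqref{final linearization} preserve this symmetry. The domain $\omega^2\psi^2\hat H^2_0$ corresponds to weights $\gamma=\delta=2$, which makes the section decay strictly faster than every restrictive indicial root ($\omega^{-1}$ at $\omega=0$ and $\psi^{-k-1},\psi^0$ at $\psi=0$); by the usual comparison of minimal and maximal closures on weighted edge spaces, the two closures coincide, giving a unique self-adjoint realization. Invertibility then follows from triviality of the $L^2$-kernel: any $s\in\ker\Phi$ admits a partial expansion starting from the growing indicial roots compatible with the weight, and pairing $\langle\Phi s,s\rangle_{L^2}$ and integrating by parts using \eqref{linearization2} yields
\begin{equation*}
\cos^2\beta\,\|\nabla_1 s\|^2 + \sin^2\beta\,\|\nabla_2 s\|^2 - \langle\mathcal D_y^2 s,s\rangle - \cos^{-2}\beta\,\langle \phi_1^2 s,s\rangle = 0,
\end{equation*}
with boundary terms suppressed by the chosen decay. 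Each summand is non-negative, forcing $\nabla_1 s = \nabla_2 s = \mathcal D_y s = [\phi_1,s] = 0$, which together with the decay at both boundaries forces $s\equiv 0$. Duality and the Fredholm alternative then propagate invertibility across the entire open ranges of weights for both the Sobolev and Hölder versions.

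The main technical obstacle is the asymptotic analysis at $\psi = 0$, where four indicial roots $\{-k-1,0,0,k+1\}$ appear rather than two. The double root at $0$ permits logarithmic expansions, and the algebraic split of the ansatz \eqref{ansatz2} into its diagonal ($\gamma$) and anti-diagonal ($\delta$) components is not symmetric: the anti-diagonal block sees the pair $\pm(k+1)$ while the diagonal block carries the double root at $0$, reflecting the fact that the indicial operator is $\partial_\psi^2 + \psi^{-1}\partial_\psi - (k+1)^2\psi^{-2}\mathrm{Pr}_{\mathrm{Ant}}$. Matching the weight ranges for $\delta$ across both components, ruling out stray logarithms in the kernel analysis, and verifying that all boundary terms in the integration by parts truly vanish under the declared decay, are the nontrivial checks the proof must carry out.
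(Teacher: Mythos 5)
Your proposal follows essentially the same route as the paper: identify $\Phi$ as a uniformly degenerate operator with indicial roots $\{-1,2\}$ at $\omega=0$ and $\{-k-1,0,0,k+1\}$ at $\psi=0$, obtain Fredholmness and the weighted mapping ranges from Mazzeo's edge/0-calculus, establish the self-adjoint realization on $\omega^2\psi^2\hat H^2_0$, kill the kernel by an integration-by-parts positivity (Weitzenb\"ock) argument using the decay forced by the indicial roots, and conclude invertibility across the stated weight intervals by index-zero plus duality. The only cosmetic difference is that you phrase the vanishing argument via the three-dimensional formula \eqref{linearization2} while the paper works directly with the reduced operator \eqref{final linearization} on the half-sphere, but the content is identical.
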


\begin{proof}
 The fact that the operator $\Phi$ is essentially self-adjoint can be checked manually by an integration by parts argument. The fact that its domain in this case is $\omega^2\psi^2\hat H^2_0(S_+^2)$ follows from standard techniques developed in \cite{Melliptic}. The crucial property to check is that the kernel of $\Phi$ mapping between these function spaces is zero. Then, since the operator is self-adjoint, the index and as a result the co-kernel too is zero and the operator is invertible. Applying the results of \cite{Melliptic} give the invertibility of $\Phi$ between the more general spaces as given in \eqref{invertibility}. Assume on the contrary that the kernel is not empty. Let $s$ be an element of the kernel. Then, from the calculation of the indicial roots above, such a section must vanish at least quadratically near $\omega = 0$ and at least as fast as $\psi^{k+1}$ near $\psi = 0$. Therefore, we can perform an integration by parts since the boundary terms vanish. Remembering that we work in unitary gauge, and that all the fields are anti-hermitian, the term $\langle s , \hat A_{\rho}s\rangle$ vanishes and then 
\begin{equation*}
\begin{split}
0 &= \langle s , \Phi s\rangle \\
&= - |\sin\psi D_{\psi} s|^2 - |\Phi_1s|^2 - |\Phi_2s|^2 - |\Phi_3s|^2 - |\hat A_{\rho} s|^2 - |\frac{\hat A_{\theta}}{\sin\psi}s|^2,
\end{split}
\end{equation*}
where $\Phi_1 := \frac{\rho \phi_1}{\cos\beta}$, $\Phi_2 := \rho \phi_2$ and $\Phi_3 := \rho \phi_3$. In particular this implies that $D_{\psi} s = 0$. Since $s = 0$ for $\psi = 0$, this implies that $s$ is identically zero. Therefore there is no kernel and we are done.
\end{proof}

\subsection{Existence of solution}

Following the development of the linear analysis in the last subsection, it is now a simple matter to prove existence. Recall that from equation \eqref{Omega} we can write $\Omega$ as  

\[\Omega_H = \Omega_{H_0} + \gamma(-s)\mathcal L_{H_0}s + Q(s),\] 

where 

\[Q_{H_0}(s) = 2i\Lambda(\cos^2\beta (\mathcal D_1\gamma(-s))\mathcal D_1^{\dagger_{H_0}} - \sin^2\beta (\mathcal D_2\gamma(-s))\mathcal D_2^{\dagger_{H_0}})s + (\mathcal D_3\gamma(-s))\mathcal D_3^{\dagger_{H_0}}s.\]

If we denote by $\mathcal G_{H_0}$ the inverse of $\mathcal L_{H_0}$, then to rephrase our problem as a fixed point problem 
\[ 
s = \mathcal G_{H_0}( \Omega_{H_0} + Q_{H_0}(s)),
\] 
it is enough to prove a bound of the form 
\[ \|\omega^{2-\mu}\psi^{2-\nu}(Q(s_1) - Q(s_2))\|_{0;0,a} \le C(\|\omega^{-\mu}\psi^{-\nu}s_1\|_{2;0,a} + \|\omega^{-\mu}\psi^{-\nu}s_1\|_{2;0,a})\|\omega^{-\mu}\psi^{-\nu}(s_1-s_2)\|_{2;0,a}\]
as long as $\|\omega^{-\mu}\psi^{-\nu}s_1\|_{2;0,a} + \|\omega^{-\mu}\psi^{-\nu}s_1\|_{2;0,a} \le \epsilon$ for some appropriately small $\epsilon >0$. At this point, we re-write 
\[ 
Q_{H_0}(s) = 2i\Lambda(\cos^2\beta (\delta(-s)\mathcal D_1 s)\gamma(-s)\mathcal D_1^{\dagger_{H_0}}s - \sin^2\beta (\delta(-s)\mathcal D_2 s)\gamma(-s)\mathcal D_2^{\dagger_{H_0}}s) + (\delta(-s)\mathcal D_3 s)\gamma(-s)\mathcal D_3^{\dagger_{H_0}}s,
\]
where \[\delta(-s) := \frac{1 - \text{ad}s - e^{-\text{ad}s}}{(\text{ad}s)^2}.
\]

Clearly both $\gamma(-s)$ and $\delta(-s)$ are controlled by $s$. Let us show how to bound the term 
\[
\|\omega^{2-\mu}\psi^{2-\nu}(\delta(-s_1)\mathcal D_1 s_1)\gamma(-s_1)\mathcal D_1^{\dagger_{H_0}}s_1 - \delta(-s_2)\mathcal D_1 s_2)\gamma(-s_2)\mathcal D_1^{\dagger_{H_0}}s_2)\|_{0;0,a}
\]
since the other terms are bounded similarly. In order for the following argument to work we need that both $\mu$ and $\nu$ be nonnegative. In fact, $\nu$ must be strictly positive since $v = 0$ corresponds to an indicial root. It is a straightforward computation that the error term belongs to the space $\omega^{-2}\hat C_0^{0,a}(S_+^2)$. This corresponds to $\mu = 0$ and $\nu = 2$. Therefore, since the error term is in the required region we can proceed with the following argument. Using the triangle inequality the term we wish to bound is less than or equal to 
\begin{align*}
&\|\omega^2\delta(-s_1)\mathcal D_1s_1\left( \gamma(-s_1)\mathcal D_1^{H_0^{\dagger}}(s_1 - s_2) + (\gamma(-s_1) - \gamma(-s_2))\mathcal D_1^{H_0^{\dagger}}s_2\right)\|_{0;0,a} + \\
&\|\omega^2\gamma(-s_2)\mathcal D_1^{H_0^{\dagger}}s_2 \Bigl( \delta(-s_2)\mathcal D_1(s_1 - s_2) + (\delta(-s_1) - \delta(-s_2))\mathcal D_1s_1\Bigr) \|_{0;0,a}
\end{align*}

Let us focus on the first line of this bound since the second line can be bounded in a similar way. The condition $\|\psi^{-2}s_1\|_{2;0,a} + \|\psi^{-2}s_1\|_{2;0,a} \le \epsilon$ implies that both $\gamma(-s_i)$ and $\delta(-s_i)$ are bounded in the supremum norm by a constant that only depends on $\epsilon$. Now, 
\begin{align*}
\|\omega\mathcal D_1^{H_0^{\dagger}}(s_1 - s_2)\|_{0;0,a} &\le \|\omega\psi^{-1}\mathcal D_1^{H_0^{\dagger}}(s_1 - s_2)\|_{0;0,a} \\
&\le \|\psi^{-2}(s_1 - s_2)\|_{2;0,a}.
\end{align*}

Similarly $\|\omega\mathcal D_1^{H_0^{\dagger}}s_2\|_{0;0,a} \le \|\psi^{-2}s_2\|_{2;0,a}$ and $\|\omega\mathcal D_1s_1\|_{0;0,a} \le \|\psi^{-2}s_1\|_{2;0,a}$. Finally, 
\begin{align*}
|\gamma(-s_1) - \gamma(-s_2)| &\le C|s_1 - s_2| \\
&\le C\|s_1 - s_2\|_{2;0,a}\\
&\le C\|\psi^{-2}(s_1 - s_2)\|_{2;0,a},
\end{align*}
where the constant $C$ can change from line to line but is dependent only on $\epsilon$. Putting all these bounds together, it is straightforward to obtain the required inequality and therefore we are done.

\section{A priori estimates}
By the results of the previous section, given a solution to the TEBE \eqref{TEBE} of the form \eqref{ansatz} 
for some $\beta_0$, there is an open interval $(\beta_0-\varepsilon, \beta_0 + \varepsilon)$ such that \eqref{TEBE}
admits a solution of this specified form for any $\beta$ in this interval. To prove that solutions exist for every $\beta \in (-\pi/6, \pi/6)$,
or equivalently $\zeta \in [0,1/2)$, we invoke a standard continuity argument. It suffices to show that if $\beta_j$ 
converges to some value in the appropriate open interval, then the solutions $(u_j(\sigma), v_j(\sigma))$ associated
to these parameters have a subsequence which converges to a solution for the limiting value of $\beta$.   Given that this 
is an ODE system, it suffices to deduce uniform $C^1$ bounds for $(u_j, v_j)$ along with bounds which allow us to conclude
that the asymptotic conditions \eqref{asymptotics} are preserved in the limit. This is the goal of this section. 

\subsection{Preliminary bounds and a first integral}
We begin by rewriting \eqref{odesystem} in a slightly nicer form.  Observe that the linear part of these equations can
be rewritten as
\begin{equation*}
(\sigma^2+1)u''+\sigma u' = ((\sigma^2+1)^{1/2}\pa_{\sigma})^2 u.
\end{equation*}
This suggests the change of variables
\begin{equation*}
\frac{\, d\sigma}{\, d\tau} = (\sigma^2+1)^{1/2}, \ \ \mbox{i.e.,}\ \ \sigma = \sinh \tau.
\end{equation*}

Setting $\zeta = \sin \beta$, we also define the quantities
\begin{equation}\label{compactnotation}
V(\tau) :=1- \zeta^2 (k+1)v, \quad T(\tau):= \tanh \tau.
\end{equation}
In terms of these, \eqref{odesystem} now takes the following simpler form
\begin{equation}\label{compactodesystem}
\begin{split}
u'' &= -e^{-2u}(4V^2+\zeta^2v'^2 + 4\zeta^2Vv'T) \\
v'' &= u'(2v' + 4VT) - 4(k+1)V.
\end{split}
\end{equation}
The reader should beware that in \eqref{compactodesystem}, the derivatives $u', v'$, etc.\ are taken with respect to $\tau$,
while in \eqref{odesystem} they are with respect to $\sigma$. Henceforth we only use the variable $\tau$, so hopefully this should
not cause confusion. The function $T(\tau)$ is not intended as a new independent variable. 

We also state the asymptotic conditions: 
\begin{equation}\label{asymp}
\begin{split}
& u(\tau) \sim \log \tau, \ \tau \searrow 0,\ \ u(\tau) \sim (k+1) \tau, \ \tau \nearrow \infty, \\ 
& v(\tau) \sim 2k\tau^2, \tau \searrow 0,\ \ |v(\tau)| \leq C e^{ (k+1)(1-\epsilon)\tau}\ \ \tau \nearrow \infty.
\end{split}
\end{equation}

\begin{lemma}\label{lemma1}
If $(u,v)$ is a solution of \eqref{compactodesystem} and satisfies the asymptotic conditions \eqref{asymp}, then 
\begin{equation}\label{integrated}
u'^2 = (4V^2- \zeta^2v'^2)e^{-2u} + (k+1)^2.
\end{equation}
\end{lemma}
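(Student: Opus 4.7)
The plan is to recognize \eqref{integrated} as a conservation law of the ODE system \eqref{compactodesystem} and to compute the value of the conserved quantity via asymptotics. I would set
\[
F(\tau) := u'^2 - (4V^2 - \zeta^2 v'^2)e^{-2u} - (k+1)^2,
\]
and the task is to show that $F \equiv 0$. The strategy splits into two steps: prove $F' \equiv 0$ identically, then evaluate the constant value of $F$ using the $\tau \to \infty$ asymptotics in \eqref{asymp}.

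For the first step, differentiating gives
\[
F' = 2u'u'' + 2u'(4V^2 - \zeta^2 v'^2)e^{-2u} - (8VV' - 2\zeta^2 v'v'')e^{-2u},
\]
where from \eqref{compactnotation} we have $V' = -\zeta^2(k+1)v'$. Substituting $u''$ from the first equation of \eqref{compactodesystem} into the $2u'u''$ term and combining with $2u'(4V^2 - \zeta^2 v'^2)e^{-2u}$, the $V^2$ contributions cancel and one is left with $-4\zeta^2 u' v'(v' + 2VT)e^{-2u}$. On the other hand, $-8VV' = 8\zeta^2(k+1)Vv'$, and the second equation of \eqref{compactodesystem} yields $2\zeta^2 v'v'' = 4\zeta^2 u' v'(v' + 2VT) - 8\zeta^2(k+1)Vv'$; adding these produces exactly $+4\zeta^2 u' v'(v' + 2VT)$, which after reinserting the $e^{-2u}$ factor cancels the previous contribution. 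Hence $F' \equiv 0$.

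For the second step, I would pass $\tau \to \infty$ in the identity $F = \text{const}$. By \eqref{asymp}, $u(\tau) \sim (k+1)\tau$ and so $u' \to k+1$, giving $u'^2 \to (k+1)^2$. The growth bound $|v(\tau)| \le C e^{(k+1)(1-\epsilon)\tau}$ forces the same exponential bound on $V = 1 - \zeta^2(k+1)v$, so $4V^2 e^{-2u} \le C' e^{-2(k+1)\epsilon\tau} \to 0$; a parallel bound on $v'$ (obtained either by bootstrapping from the second equation of \eqref{compactodesystem} or directly from the smoothness of the symmetric ansatz) ensures $\zeta^2 v'^2 e^{-2u} \to 0$ as well. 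Therefore $F(\tau) \to (k+1)^2 - 0 - (k+1)^2 = 0$, and combined with $F' \equiv 0$ this gives $F \equiv 0$, which is \eqref{integrated}.

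The main obstacle is the algebraic cancellation in $F' \equiv 0$: this is mechanical but sign-sensitive, and the $T$-dependent cross terms must be tracked with care to see why the term $-8VV'$ is precisely what is needed to compensate the $4(k+1)V$ piece of $v''$. A minor secondary point is justifying $v'^2 e^{-2u} \to 0$ at infinity; this is not built directly into \eqref{asymp} but is a mild consequence of the ODE. Once established, \eqref{integrated} will be the key pointwise algebraic constraint feeding into the a priori estimates of this section.
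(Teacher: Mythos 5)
Your proposal is correct and follows essentially the same route as the paper: the identity $F'\equiv 0$ is exactly the paper's step of multiplying the first equation by $u'$ and the second by $\zeta^2 v'e^{-2u}$ and recognizing a total derivative, and both arguments then fix the constant by letting $\tau\to\infty$ and using $u'\to k+1$ together with subexponential-relative-to-$e^{(k+1)\tau}$ growth of $v$ and $v'$ (the paper likewise invokes the growth of $v'$ beyond the literal statement of \eqref{asymp}).
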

\begin{proof}
Multiply the first equation by $u'$ and the second equation by $\zeta^2v'e^{-2u}$ and add the two equations.  After some
calculaiton this leads to the identity
\begin{equation*}
\begin{split}
u'u'' &= - u'e^{-2u}(4V^2 - \zeta^2v'^2) - \zeta^2e^{-2u}v'(v''+4(k+1)V)\\
&= \frac{1}{2}((4V^2 - \zeta^2v'^2)e^{-2u})'.
\end{split}
\end{equation*}
Integrating this, we obtain the desired formula, at least up to some additive constant.  

To calculate the constant, we integrate from $1$ to $\tau$. This gives
\[
u'(\tau)^2 = \left(4 (1 - \zeta^2 (k+1) v(\tau)^2)^2 - \zeta^2 v'(\tau)^2\right) e^{-2 u(\tau)} + C_1
\]
for every $\tau > 1$, where $C_1$ arises from the boundary terms at $\tau = 1$.    Since 
$v(\tau)$ and $v'(\tau)$ grow at some exponential rate less than $k+1$ as $\tau \to \infty$,
the first quantity on the right converges to $0$.  Since $u'(\tau) \to k+1$, we conclude
that $C_1 = (k+1)^2$, as claimed. 
\end{proof}

We can deduce some immediate useful consequences from this.
\begin{lemma}\label{lemma2}
If $(u,v)$ solves \eqref{compactodesystem}, then $u'$ is monotone decreasing and 
\begin{equation}\label{first bound}
4V^2 - \zeta^2v'^2 \geq 0
\end{equation}
for $\tau > 0$. 
\end{lemma}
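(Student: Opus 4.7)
The plan is to deduce both assertions by combining the integrated identity \eqref{integrated} with a completion-of-squares observation on the right-hand side of the $u''$ equation in \eqref{compactodesystem}.

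First I would show that $u'' \leq 0$ by rewriting the bracketed quantity in the $u''$ equation. Viewing $4V^2 + \zeta^2 v'^2 + 4\zeta^2 V v' T$ as a quadratic in $v'$ with positive leading coefficient, its discriminant equals $16 V^2 \zeta^2(\zeta^2 T^2 - 1)$, which is non-positive because $|\zeta|<1$ in the range of interest and $|T|=|\tanh\tau|\leq 1$. Equivalently, one has the clean identity
\begin{equation*}
4V^2 + \zeta^2 v'^2 + 4\zeta^2 V v' T = (\zeta v' + 2\zeta V T)^2 + 4V^2\bigl(1 - \zeta^2 T^2\bigr),
\end{equation*}
exhibiting the bracket as a sum of two non-negative terms. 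Since the prefactor $-e^{-2u}$ is strictly negative, we conclude $u'' \leq 0$, so $u'$ is monotone decreasing on $(0,\infty)$.

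Next I would combine this monotonicity with the asymptotics \eqref{asymp}. The condition $u(\tau)\sim(k+1)\tau$ as $\tau\to\infty$ forces $u'(\tau)\to k+1$, and since $u'$ is decreasing we get $u'(\tau)\geq k+1$ for all $\tau>0$. Plugging this into the integrated first integral from Lemma \ref{lemma1},
\begin{equation*}
(4V^2 - \zeta^2 v'^2)e^{-2u} = u'^2 - (k+1)^2 \geq 0,
\end{equation*}
and since $e^{-2u}>0$ the inequality $4V^2 - \zeta^2 v'^2 \geq 0$ follows immediately.

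There is no real obstacle here; the proof is essentially a two-line consequence of Lemma \ref{lemma1}. The only point requiring care is to make sure $|\zeta|<1$, which is guaranteed by the restriction to $\beta\in(-\pi/6,\pi/6)$ (equivalently $\zeta\in[0,1/2)$) invoked at the start of the continuity argument; this is exactly what makes the completion-of-squares work, and it is precisely the mechanism by which the a priori estimates are expected to degenerate as $\theta\to 0$ or $\theta\to\pi/2$.
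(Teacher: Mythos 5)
Your proof is correct and follows essentially the same route as the paper: show the bracketed quantity in the $u''$ equation is non-negative so that $u''\le 0$, use monotonicity together with $u'\to k+1$ to conclude $u'\ge k+1$, and read off \eqref{first bound} from the first integral \eqref{integrated}. The only (cosmetic) difference is your exact completion of squares $(\zeta v'+2\zeta VT)^2+4V^2(1-\zeta^2T^2)$, which is marginally sharper than the paper's estimate $4V^2(1-\zeta^2)+\zeta^2(2|V|-|v'|)^2\ge 4V^2(1-\zeta^2)$ but plays the identical role.
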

\begin{proof}
We first show that $u'' = -e^{-2u}(4V^2+\zeta^2v'^2 + 4\zeta^2Vv'T)  \leq 0$.  To do this, note that
\begin{align*}
4V^2+ \zeta^2 v'^2 +4\zeta^2 Vv'T &\geq 4V^2(1-\zeta^2) + \zeta^2(4V^2 + v'^2 -4|Vv'|) \\
&= 4V^2(1-\zeta^2) + \zeta^2(2|V| - |v'|)^2\\
&\geq 4V^2(1-\zeta^2)
\end{align*}
Since $\zeta < 1/2$, this implies that $u''\le 0$ everywhere, so $u'$ is decreasing to its limit $k+1$ as $\tau$ grows.
The identity \eqref{integrated} then gives that $4V^2 - \zeta^2v'^2 \geq 0$. 
\end{proof}

\begin{corollary}
The component $v(\tau)$ of the solution satisfies the following preliminary uniform a priori bounds:
\begin{equation}\label{crude v bounds}
e^{-2\zeta(k+1)\tau} \le~ V \le e^{2\zeta(k+1)\tau}, \qquad |v'(\tau)| \le 2\zeta^{-1} e^{2\zeta(k+1)\tau}.
\end{equation}
\end{corollary}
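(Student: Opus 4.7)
The plan is to extract these bounds directly from the differential inequality hidden inside the conclusion \eqref{first bound} of \lref{lemma2}. The key observation is that \eqref{first bound} can be rewritten as $\zeta|v'| \le 2|V|$, and since $V = 1 - \zeta^2(k+1)v$, we have $V' = -\zeta^2(k+1)v'$. Combining these gives the pointwise differential inequality
\begin{equation*}
|V'(\tau)| \le 2\zeta(k+1)|V(\tau)|.
\end{equation*}
This is a Gr\"onwall-type bound, and once we check the initial condition $V(0) = 1$ (which follows from $v(0) = 0$, a consequence of $v(\tau) \sim 2k\tau^2$ as $\tau \searrow 0$), we can integrate it.

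To make this rigorous, I would first argue that $V$ stays strictly positive for all $\tau \ge 0$. Since $V(0) = 1 > 0$, the set $\{\tau \ge 0 : V > 0 \text{ on } [0,\tau]\}$ is open. On this set, we can divide and write
\begin{equation*}
\left|\frac{d}{d\tau}\log V(\tau)\right| = \frac{|V'|}{V} \le 2\zeta(k+1).
\end{equation*}
Integrating from $0$ to $\tau$ gives $|\log V(\tau)| \le 2\zeta(k+1)\tau$, which is precisely the two-sided bound $e^{-2\zeta(k+1)\tau} \le V(\tau) \le e^{2\zeta(k+1)\tau}$. In particular $V$ remains bounded below by a strictly positive function, so by a standard open-closed argument the set where $V>0$ is all of $[0,\infty)$ and the bound holds globally.

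The bound on $v'$ then follows immediately by substituting the upper bound on $V$ back into the inequality $|v'| \le 2\zeta^{-1}|V| = 2\zeta^{-1} V$, yielding $|v'(\tau)| \le 2\zeta^{-1} e^{2\zeta(k+1)\tau}$. There is no genuine obstacle here; the only subtlety is the positivity of $V$, which one has to address before dividing by it, but this is automatic from the Gr\"onwall lower bound itself via the continuity argument above.
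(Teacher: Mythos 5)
Your proposal is correct and follows essentially the same route as the paper: both rewrite \eqref{first bound} as the differential inequality $\bigl|\tfrac{d}{d\tau}\log V\bigr| \le 2\zeta(k+1)$ and integrate from $V(0)=1$, then feed the resulting bound on $V$ back into $\zeta|v'|\le 2V$ to bound $v'$. Your explicit open--closed argument for the positivity of $V$ is a minor refinement of a point the paper leaves implicit.
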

\begin{proof}
By the preceding lemma, 
\[
-2 \le \frac{\zeta v'}{1-\zeta^2 (k+1)v} \le 2,
\]
so integrating from $0$ to $\tau$ and using that $V(0) = 1$, we obtain
\[ 
-2\zeta(k+1)\tau \le \log V(\tau) \le 2\zeta(k+1)\tau.
\] 
This gives the first part of \eqref{crude v bounds}. The second part follows from
this together with \eqref{first bound}.
\end{proof}

\subsection{Improved bounds near zero}
The next step is to derive uniform estimates for $u$ near $\tau = 0$. This uses the initial bounds of the last subsection and 
an improved bound for $v'$. 

To simplify notation further, write $b := k+1$ and $f := e^u$. 

\begin{lemma}\label{lemma3}
There exists a positive constant $C$ and an interval $[0,\tau_0]$ both independent of of $\zeta  < 1/2$ such that for any 
solution $(u,v)$ of equations \eqref{compactodesystem}, $u$ satisfies the upper bound
\begin{equation}
u(\tau) \le \log (2\tau) + C\tau \ \ 0 < \tau < \tau_0.
\end{equation}
\end{lemma}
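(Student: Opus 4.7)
The plan is to convert the first integral \eqref{integrated} into a first-order differential inequality for $f = e^u$ and then close the estimate by an integrating factor. First, I would observe that multiplying \eqref{integrated} through by $f^2 = e^{2u}$ yields
\[
(f')^2 = (4V^2 - \zeta^2 v'^2) + b^2 f^2.
\]
The asymptotic $u(\tau) \sim \log \tau$ gives $u'(\tau) \sim 1/\tau > 0$ near zero, and by Lemma~\ref{lemma2} $u'$ is monotone decreasing to its limit $b > 0$, so $u' > 0$ everywhere; hence $f' = u' f > 0$ and we may take the positive square root in the displayed identity.

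Next, using Lemma~\ref{lemma2} to drop the nonnegative term $\zeta^2 v'^2$, and the elementary inequality $\sqrt{x + y} \le \sqrt{x} + \sqrt{y}$ for $x, y \ge 0$, I obtain
\[
f' \le \sqrt{4V^2 + b^2 f^2} \le 2|V| + b f.
\]
The lower half of the corollary \eqref{crude v bounds} ensures $V > 0$ so $|V| = V$, while the upper half combined with $\zeta < 1/2$ gives $V \le e^{2\zeta b \tau} \le e^{b\tau}$. Therefore
\[
f' \le 2 e^{b\tau} + b f.
\]

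The final step is a standard integrating factor: set $g(\tau) := f(\tau) e^{-b\tau}$, so that $g'(\tau) = (f' - b f)e^{-b\tau} \le 2$. The asymptotic $f(\tau) \sim \tau$ as $\tau \searrow 0$ shows that $g$ extends continuously to $\tau = 0$ with $g(0) = 0$; integrating from $0$ to $\tau$ then gives $g(\tau) \le 2\tau$, i.e., $f(\tau) \le 2\tau e^{b\tau}$. Taking logarithms yields $u(\tau) \le \log(2\tau) + b\tau$, which is the desired bound with $C = b = k+1$. In fact, the restriction to a sub-interval $[0, \tau_0]$ in the lemma statement is inessential; the argument works for all $\tau > 0$ uniformly in $\zeta < 1/2$, although the bound is of course only useful when $\tau$ is small.

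I do not foresee a serious obstacle in this argument: the first integral of Lemma~\ref{lemma1} and the sign and crude a priori bounds of Lemma~\ref{lemma2} and its corollary are tailor-made for this reduction. The only delicate point is justifying the integration from $0$ to $\tau$ despite $u(0) = -\infty$, and this is immediate from the fact that $f = e^u$ extends continuously and vanishes there.
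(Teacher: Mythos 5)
Your proof is correct, and it rests on exactly the same pillars as the paper's: the first integral \eqref{integrated} from Lemma \ref{lemma1}, the crude bound $V \le e^{2\zeta(k+1)\tau}$ from the corollary, the substitution $f = e^u$ with $f(0)=0$, and an integration of the resulting first-order differential inequality. Where you diverge is in how that inequality is closed. The paper keeps the nonlinearity intact, integrating $f'/\sqrt{b^2f^2+4} \le e^{2\zeta b\tau}$ exactly via $\tanh^{-1}$ and then Taylor-expanding the right-hand side to extract $f \le 2\tau + C\tau^2$ on a small interval $[0,\tau_0]$; you instead linearize first, using $\sqrt{4V^2+b^2f^2} \le 2V + bf$ and $2\zeta<1$ to get $f' \le 2e^{b\tau} + bf$, and then apply a Gr\"onwall-type integrating factor to obtain $f \le 2\tau e^{b\tau}$. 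Your route is slightly more elementary, yields the explicit constant $C = k+1$, and, as you note, holds for all $\tau>0$ rather than only on a small interval (the restriction to $[0,\tau_0]$ elsewhere in Section 4 comes from other bounds, not from this lemma, so nothing downstream is affected). The paper's exact integration has the mild advantage that the same $\tanh$ computation is reused essentially verbatim for the matching lower bound in Lemma \ref{lemma6}, where the subadditivity trick is not available; the final estimates are of identical quality. Two cosmetic remarks: dropping $\zeta^2 v'^2$ for the upper bound needs only $\zeta^2 v'^2 \ge 0$, not Lemma \ref{lemma2} (which you do need, however, if you want $f'>0$, though even that is dispensable since $f' \le |f'|$ suffices to take the square root); and the justification of integrating from $0$ despite $u(0^-)=-\infty$ via continuity of $f=e^u$ with $f(0)=0$ is exactly the point the paper also relies on.
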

\begin{proof}
By \eqref{integrated} and the first part of \eqref{crude v bounds} we get
\begin{align*}
u'^2 &= (4V^2 - \zeta^2v'^2)e^{-2u} + b^2\\
&\le 4V^2e^{-2u} + b^2\\
&\le 4e^{4\zeta(k+1)\tau}e^{-2u} + b^2.
\end{align*}
Multiplying both sides by $e^{2u}$ and dividing by $4 + b^2f^2$, 
\[
\frac{f'^2}{b^2f^2+4} \le \frac{4e^{4\zeta(k+1)\tau} + b^2f^2}{4+b^2f^2} \le e^{4\zeta(k+1)\tau}
\]
or equivalently 
\begin{equation*}
\frac{f'}{\sqrt{b^2f^2+4}} \le e^{2\zeta(k+1)\tau}.
\end{equation*}
Since $f(0) = 0$, we can integrate this to get
\begin{equation*}
\frac{1}{b}\tanh^{-1}\left(\frac{bf}{\sqrt{b^2f^2+4}}\right) \le \frac{e^{2\zeta(k+1)\tau} - 1}{2\zeta(k+1)},
\end{equation*}
which in turn implies that 
\[
\frac{bf}{\sqrt{b^2f^2+4}} \le \tanh\left(\frac{b(e^{2\zeta(k+1)\tau} - 1)}{2\zeta(k+1)}\right) \le \frac{b(e^{2\zeta(k+1)\tau} - 1)}{2\zeta(k+1)}.
\]
Since $\zeta \le 1$, there exist $C > 0$ and $\tau_0 > 0$ independent of $\zeta$ such that 
\[
\frac{b(e^{2\zeta(k+1)\tau} - 1)}{2\zeta(k+1)} \le b\tau + C\tau^2
\] 
for $\tau < \tau_0$. Hence for such $\tau$, $f \le 2\tau + C'\tau^2$. The statement of the lemma follows immediately from this last inequality. 
\end{proof}

The bound from below for $u$ near $\tau = 0$ is more subtle, and requires better control of $v'$ near $\tau=0$. 
\begin{lemma}\label{lemma4}
With $\tau_0$ as in the previous lemma, there exists $C > 0$ independent of both $\tau \in [0, \tau_0/2]$ and $\zeta$ 
such that 
\begin{equation}\label{upper v' bound}
v'(\tau)\le C\tau.
\end{equation}
\end{lemma}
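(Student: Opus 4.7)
The plan is to view the second equation of \eqref{compactodesystem} as a first-order linear ODE in $w := v'$, namely $w' - 2u' w = (4u'T - 4(k+1))V$. Multiplication by the integrating factor $e^{-2u}$ gives the identity
\[
(v'e^{-2u})' = (4u'T - 4(k+1))V\, e^{-2u}.
\]
If a suitable bound on $v' e^{-2u}$ can be extracted, then Lemma \ref{lemma3}, in the form $e^{2u(\tau)} \leq 4\tau^2 e^{2C\tau}$, will transfer it into the desired estimate on $v'$.

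From the boundary asymptotics $v(\tau) \sim 2k\tau^2$ and $u(\tau) \sim \log(2\tau)$, both the boundary contribution $v'(\tau)e^{-2u(\tau)} \sim k/\tau$ and the integral $\int_\epsilon^\tau (4u'T - 4(k+1))V\, e^{-2u}\, ds$ blow up like $1/\tau$ at $\tau = 0$ (the integrand is asymptotically $-k/s^2$), but their singular parts cancel exactly. Carrying out this cancellation carefully yields
\[
v'(\tau)\, e^{-2u(\tau)} = \frac{k}{\tau} + O(1)
\]
uniformly for $\zeta < 1/2$ and $\tau \in (0, \tau_0/2]$. Multiplying by $e^{2u(\tau)} \leq 4\tau^2 e^{2C\tau}$ then gives $v'(\tau) \leq 4k\tau e^{2C\tau} + O(\tau^2) \leq C'\tau$ on $[0, \tau_0/2]$, possibly after shrinking $\tau_0$.

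The main obstacle will be making the cancellation uniform in $\zeta$. Lemma \ref{lemma3} provides only a one-sided refinement $u(\tau) \leq \log(2\tau) + C\tau$, whereas to control the $O(1)$ remainder above one also needs a matching lower bound $u(\tau) \geq \log(2\tau) - C'\tau$ uniformly in $\zeta$. Such a bound can be bootstrapped from the integrated identity $(u')^2 - (k+1)^2 = (4V^2 - \zeta^2 v'^2)e^{-2u}$ together with the non-negativity \eqref{first bound} and the crude bounds \eqref{crude v bounds}: forcing $\zeta^2 v'^2 \geq 0$ at leading order pins down the product $e^{u(\tau)}/\tau \to 2$ with a $\zeta$-independent error. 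An alternative that avoids sharpening the $u$-asymptotic is to work with the auxiliary function $V = 1 - \zeta^2(k+1)v$, which by a direct computation satisfies the second-order ODE $V'' - 2u'V' = 4\zeta^2(k+1)((k+1) - u'T)V$ with regular initial data $V(0) = 1,\ V'(0) = 0$; estimating the integrating-factor identity $(V'e^{-2u})' = 4\zeta^2(k+1)((k+1)-u'T)V e^{-2u}$ should produce $|V'(\tau)| \leq C\zeta^2\tau$, whence $|v'(\tau)| = |V'(\tau)|/(\zeta^2(k+1)) \leq C\tau$ uniformly in $\zeta$.
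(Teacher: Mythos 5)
Your starting point, the integrating--factor identity $(v'e^{-2u})' = 4V(u'T-(k+1))e^{-2u}$, is the same as the paper's, but the way you propose to exploit it has a genuine circularity. Both of your routes (the exact cancellation of the $k/\tau$ singular parts, and the reformulation in terms of $V$, whose source term $4\zeta^2(k+1)((k+1)-u'T)Ve^{-2u}\sim \zeta^2k(k+1)/s^2$ is non-integrable at $s=0$) require an upper bound on $e^{-2u}$, i.e.\ a lower bound $u(\tau)\ge \log(2\tau)-C\tau$ that is \emph{uniform in} $\zeta$, in order to make the $O(1)$ remainder or the boundary term at the far endpoint uniform. At this stage of the argument no such bound is available: Lemma \ref{lemma3} is one-sided, and the matching lower bound is only proved in Lemma \ref{lemma6}, whose proof uses $|v'|\le C\tau$, i.e.\ the very statement of this lemma together with Lemma \ref{lemma5}. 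Your suggested bootstrap of the lower bound from \eqref{integrated}, \eqref{first bound} and \eqref{crude v bounds} does not close this loop: writing $f=e^u$, the first integral gives $f'^2=4V^2-\zeta^2v'^2+b^2f^2$, and to force $f\ge 2\tau-C\tau^2$ you need $4V^2-\zeta^2v'^2\ge 4-C\tau^2$, i.e.\ an upper bound $\zeta^2v'^2\le C\tau^2$; the crude bounds only give $\zeta^2v'^2\le 4e^{4\zeta b\tau}$, so near $\tau=0$ all you get is $4V^2-\zeta^2v'^2\ge -C\tau$, which yields no positive lower bound on $f'$. Discarding $\zeta^2v'^2\ge 0$, as you propose, only reproduces the \emph{upper} bound of Lemma \ref{lemma3}, not the lower one.

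The paper's proof is built precisely to avoid needing a uniform upper bound on $e^{-2u}$: it bounds the right-hand side from below by $Cu''$ (using $u''\le -Ce^{-2u}$ from the proof of Lemma \ref{lemma2}, with $V$ uniformly bounded below on $[0,\tau_0]$ by \eqref{crude v bounds}), integrates from $\tau$ to $\tau_0$ so that the unknown factor $e^{-2u}$ is traded for $u'$, and controls $u'e^{2u}=f'f\le C\tau$ by Lemma \ref{lemma3}. The one remaining dangerous term, $e^{2(u(\tau)-u(\tau_0))}$, which could blow up if $u(\tau_0)\to-\infty$ along a sequence $\zeta_j\to\zeta_0$, is handled by a dichotomy: either $u(\tau_0)\ge -C$, or a second integration of $u''\le -Ce^{-2u(\tau_0)}$ forces $u(\tau)\le 2u(\tau_0)$ and hence $e^{2(u(\tau)-u(\tau_0))}\le e^{u(\tau)}\le C\tau$. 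Your proposal contains neither the comparison with $u''$ nor this dichotomy, and without them (or some substitute giving a $\zeta$-uniform bound on $e^{-2u}$ on $[\tau,\tau_0]$) the claimed uniform estimate $v'e^{-2u}=k/\tau+O(1)$, and likewise $|V'|\le C\zeta^2\tau$, is not justified.
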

\begin{proof}
We look more carefully at the equation for $v$: 
\begin{equation*}
v'' = u'(2v' + 4VT) - 4(k+1)V.
\end{equation*}
By \eqref{crude v bounds}, $4(k+1)V \le C'$ for $\tau \leq \tau_0$. By the same bound, we also know that $V\ge 0$. 
Since $T\geq 0$, we obtain 
\[
v'' - 2u'v' \ge -C.
\] 
Multiplying both sides by $e^{-2u}$, this is equivalent to 
\[ 
-Ce^{-2u} \le (v'e^{-2u})'.
\]
On the other hand, we already showed in the proof of \eqref{lemma2} that 
\begin{equation} 
u'' \le -4V^2e^{-2u}(1-\zeta^2) \leq -C e^{-2u}.
\label{oldu}
\end{equation} 
Combining this with the previous inequality, and using that $V \geq C > 0$ for $\tau \leq \tau_0$, we conclude that
\[
Cu'' \le (v'e^{-2u})'.
\] 
Integrating this from any $\tau \leq \tau_0/2$ to $\tau_0$ gives
\[
C(u'(\tau_0) - u'(\tau)) \le v'(\tau_0)e^{-2u(\tau_0)} - v'(\tau)e^{-2u(\tau)}.
\]
Since $u'(\tau_0) > 0$, we can drop it. We also have a uniform bound $|v'|\le C$ for $\tau \leq \tau_0$. Therefore,
\begin{equation}
v'(\tau) \le C(u'(\tau)e^{2u(\tau)} + e^{2(u(\tau) - u(\tau_0))}).
\label{ubv}
\end{equation}
We must show that there is a uniform bound for the right side. 

Let us start by bounding $u'(\tau)e^{2u(\tau)} = f'f$. From the preceding analysis, we have $f\le C\tau$ for $\tau \leq \tau_0$, uniformly. 
It is also straightforward to deduce from the proof of Lemma \eqref{lemma3} that there is some $C>0$ such that $f'\le C$. 
Taken together, this gives a bound of $C \tau$ for the first term on the right.  

The second term is harder to deal with for the following reason. The factor $e^{-2u(\tau_0)}$ might diverge to infinity as $\zeta
\to \zeta_0 < 1/2$. We show below that this is not the case, but unfortunately that proof relies on the current lemma.
We proceed instead in the following way. Since $u'\ge k+1$, $u(\tau)$ increases monotonically, and 
thus $e^{-2u(\tau_0)} \le e^{-2u(\tau)}$ when $\tau\le \tau_0$. Combining this with \eqref{oldu}, we see that
\begin{equation*}
u'' \le -Ce^{-2u(\tau_0)}.
\end{equation*}
Integrating from $\tau_0/2$ to $\tau_0$ and using that $u$ is increasing, we get that if $\tau \leq \tau_0/2$, 
\[
u(\tau) \le u(\tau_0/2) \le -\frac{C}{4}\tau_0^2e^{-2u(\tau_0)} - \frac{\tau_0}{2}u'(\tau_0) + u(\tau_0) \leq 
- C_1 e^{-2u(\tau_0)} + u(\tau_0),
\]
where $C_1 = -\frac14 C \tau^2$. The last inequality holds since $u'(\tau_0) > 0$. 

Finally, 
\[
-C_1 e^{-2u(\tau_0)} \le u(\tau_0)\ \ \mbox{provided}\ \ u(\tau_0) \leq -C.
\]
This is simply because the left hand side decreases exponentially with $u(\tau_0)$ while the right side only decreases linearly. 
In any case, this now gives
\[
u(\tau) \le 2u(\tau_0)
\] 
and hence, using Lemma \eqref{lemma3},
\[
e^{2(u(\tau)-u(\tau_0))} = e^{u(\tau) + (u(\tau) - 2u(\tau_0)} \le e^{u(\tau)} \le C\tau.
\]

This last inequality gives a contradiction if $u_j(\tau_0) \to -\infty$ as $\zeta_j \to \zeta_0$. 
On the other hand, if $u_j(\tau_0)\ge -C$, then 
\[
e^{2(u(\tau)-u(\tau_0))} \le Ce^{2u(\tau)} \le C\tau^2 \leq C' \tau.
\]
Using this in \eqref{ubv} finishes the proof.
\end{proof}

We now prove the lower bound for $v'(\tau)$. 
\begin{lemma}\label{lemma5}
There exists $C > 0$ such that for any $0 \leq \tau \leq \tau_0/2$, 
\begin{equation}\label{lower v' bound}
v'(\tau) \ge -C\tau.
\end{equation}
\end{lemma}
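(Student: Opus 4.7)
The proof follows the pattern of Lemma \ref{lemma4}, with all inequalities reversed. Rewriting the second equation of \eqref{compactodesystem} as $v'' - 2u'v' = 4V(u'T - (k+1))$, the first step is to establish the upper bound $v'' - 2u'v' \le C$ on $[0,\tau_0]$. From the identity \eqref{integrated} one obtains $u' \le 2Ve^{-u} + (k+1)$, hence
\begin{equation*}
u'T - (k+1) \le 2Ve^{-u}T + (k+1)(T-1) \le 2Ve^{-u}T,
\end{equation*}
using $T \le 1$. Combining with the crude bound $V \le C$ on $[0,\tau_0]$ from \eqref{crude v bounds}, with $T \le \tau$, and with a uniform lower bound $e^u \ge c\tau$ (addressed below), the desired uniform upper bound follows.

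Multiplying by $e^{-2u}$ yields $(v'e^{-2u})' \le C e^{-2u}$; combining with $u'' \le -ce^{-2u}$ from Lemma \ref{lemma2}, I get $(v'e^{-2u})' \le -C' u''$ for some $C'>0$. Integrating from $\tau \in [0,\tau_0/2]$ to $\tau_0$, rearranging, and multiplying by $e^{2u(\tau)}$:
\begin{equation*}
v'(\tau) \ge -C' u'(\tau)e^{2u(\tau)} + e^{2u(\tau)}\bigl[C'u'(\tau_0) + v'(\tau_0)e^{-2u(\tau_0)}\bigr].
\end{equation*}
The first term is at least $-C\tau$, since $u'(\tau)e^{2u(\tau)} = f'(\tau)f(\tau) \le C\tau$ using $f = e^u \le C\tau$ from Lemma \ref{lemma3} and $f' \le C$ from \eqref{integrated}. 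The bracketed term is handled exactly as at the end of the proof of Lemma \ref{lemma4}: $e^{2u(\tau)} \le C\tau^2 \le C\tau$ supplies one power of $\tau$, $u'(\tau_0)$ is bounded above, and $e^{-2u(\tau_0)}$ remains uniformly bounded via the same contradiction argument used there.

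The main technical obstacle is the uniform-in-$\zeta$ bound $e^u \ge c\tau$ required in the first step. Since $v'(0) = 0$ (from the asymptotic $v(\tau) \sim 2k\tau^2$), the identity \eqref{integrated} gives $(e^u)'(0)^2 = 4V(0)^2 - \zeta^2 v'(0)^2 = 4$, so $(e^u)'(0) = 2$ uniformly in $\zeta$. A continuity argument exploiting the uniform upper bound $v'(\tau) \le C\tau$ from Lemma \ref{lemma4} (and the crude bound on $\zeta^2 v'^2$ from \eqref{crude v bounds}) then shows that $(e^u)'(\tau) \ge 1$ on a uniform interval $[0,\tau_1]$, whence $e^u(\tau) \ge \tau$ there.
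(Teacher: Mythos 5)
Your second half is sound and is in fact the paper's own argument: multiply by $e^{-2u}$, compare with $u''\le -c\,e^{-2u}$, integrate from $\tau$ to $\tau_0$, and control the boundary terms using $ff'\le C\tau$ together with the dichotomy on $u(\tau_0)$ from Lemma \ref{lemma4}. The gap is in your first step, where you bound $v''-2u'v'$ by a constant: this requires the uniform-in-$\zeta$ lower bound $e^{u(\tau)}\ge c\tau$, and your proposed justification of that bound does not hold up. From \eqref{integrated} one has $f'^2=4V^2-\zeta^2v'^2+(k+1)^2f^2$, and at this stage the only uniform control on the negative part of $v'$ is \eqref{first bound}, i.e. $\zeta^2v'^2\le 4V^2$ (the crude bounds \eqref{crude v bounds} are just a restatement of this and give nothing more). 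Lemma \ref{lemma4} bounds $v'$ from above only, which does give $V\ge 1-C\tau^2$, but a bound on $\zeta^2v'^2$ needs a bound on $-v'$ — precisely the statement being proven. So the cited ingredients yield only $f'^2\ge (k+1)^2f^2$, not $f'\ge 1$ on a uniform interval; note that in the paper the bound $e^u\ge 2\tau-C\tau^3$ appears only in Lemma \ref{lemma6}, whose proof uses both \eqref{upper v' bound} and \eqref{lower v' bound}, so your dependency is circular. A genuine bootstrap (running your first part on the maximal interval where $f(s)\ge s$ and integrating only up to its endpoint $\bar\tau$) does not close in the obvious way either: the boundary term $-v'(\bar\tau)e^{-2u(\bar\tau)}$ is then controlled only by $C\zeta^{-1}\bar\tau^{-2}$, which after multiplication by $e^{2u(\tau)}\le C\tau^2$ contributes a term of size $O(1)$ at $\tau=\bar\tau$, not small enough to propagate $f'\ge 1$.

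The fix is to avoid any lower bound on $e^u$, which is what the paper does at this point: since $0\le V\le C$ on $[0,\tau_0]$ and $T\le\tau$, the second equation of \eqref{compactodesystem} gives $v''-2u'v'\le 4Vu'T\le C\tau u'$, hence $(v'e^{-2u})'\le C\tau u'e^{-2u}=-\tfrac{C}{2}\tau\,(e^{-2u})'$; an integration by parts on $[\tau,\tau_0]$ turns the right-hand side into $\tfrac{C}{2}\tau e^{-2u(\tau)}$ (which becomes the harmless $C\tau$ after multiplying by $e^{2u(\tau)}$), a negative term that can be dropped, and $\tfrac{C}{2}\int_\tau^{\tau_0}e^{-2u}$, to which your comparison with $u''$ applies verbatim. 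With this replacement your write-up goes through and the problematic bound $e^u\ge c\tau$ is never needed.
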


\begin{proof}
We again use the second equation in \eqref{compactodesystem}. 
Since $0\le~V\le C$ for $\tau \leq \tau_0/2$ and $T\le \tau$ for all $\tau$, we see that
\[
v'' - 2u'v' \le C\tau u'.
\] 
Multiplying both sides by $e^{-2u}$ and integrating from $\tau\in [0,\tau_0/2]$ to $\tau_0$, we obtain  
\[ 
v'(\tau_0)e^{-2u(\tau_0)} - v'(\tau)e^{-2u(\tau)} \le -C(\tau_0e^{-2u(\tau_0)} - \tau e^{-2u(\tau)}) + C\int_{\tau}^{\tau_0} e^{-2u(\xi)}\,d\xi
\leq C\int_{\tau}^{\tau_0} e^{-2u(\xi)}\,d\xi;
\] 
the first term on the right has been dropped since it is negative. By \eqref{crude v bounds}, which implies that  $v' \ge -C$ 
and $u'' \le -Ce^{-2u(\tau_0)}$, we obtain
\begin{align*}
-v'(\tau) &\le C\tau + Ce^{2(u(\tau) - u(\tau_0))} -Ce^{2u(\tau)}\int_{\tau}^{\tau_0}u''(\xi)\,d\xi\\
&\le C\tau + Ce^{2(u(\tau) - u(\tau_0))} + Ce^{2u(\tau)}(u'(\tau) - u'(\tau_0))\\
&\le C\tau + Ce^{2(u(\tau) - u(\tau_0))} +  Ce^{2u(\tau)}u'(\tau)\\
&\le C\tau + Ce^{2(u(\tau) - u(\tau_0))}\\
&\le C\tau,
\end{align*}
since we showed earlier that $Ce^{2u(\tau)}u'(\tau)\le C\tau$. From the proof of Lemma \eqref{lemma4} we can also bound 
$ Ce^{2(u(\tau) - u(\tau_0))}$ by $C\tau$. This gives the desired inequality. 
\end{proof}

We have now established, in \eqref{upper v' bound} and \eqref{lower v' bound}, the uniform estimate $|v'(\tau)| \leq C\tau$ 
for $0 \leq \tau \leq \tau_0$, where $C$ and $\tau_0$ are independent of $\zeta$.   The next goal is to obtain
uniform bounds for $u$. 
\begin{lemma}\label{lemma6}
There exist constants $C > 0$ and $\tau_0 > 0$, both independent of $\zeta$, such that if $(u,v)$ solves \eqref{compactodesystem} 
and satisfies the specified asymptotics, then 
\begin{equation}
|u(\tau) - \log (2\tau)| \leq C \tau^2, \qquad |u'(\tau) - \tau^{-1}| \leq C \tau
\end{equation}
for $\tau \leq \tau_0$. 
\end{lemma}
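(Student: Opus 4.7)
The approach is to feed the improved derivative bound $|v'(\tau)| \le C\tau$ (from Lemmas \ref{l.4} and \ref{l.5}) back into the first integral \eqref{integrated}, together with a substitution that linearizes the equation near $\tau = 0$. First I would integrate $|v'| \le C\tau$ to obtain $|v(\tau)| \le C\tau^2$ on $[0,\tau_0/2]$, which yields
\[
V(\tau) = 1 + O(\tau^2), \qquad \zeta^2 v'(\tau)^2 = O(\tau^2),
\]
with constants independent of $\zeta$. Substituting into \eqref{integrated} gives
\[
u'(\tau)^2 \;=\; (4+O(\tau^2))\, e^{-2u(\tau)} + (k+1)^2.
\]

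Next I would set $f := e^{u}$ as in the proof of Lemma \ref{l.3}, so that $(f')^2 = f^2 u'^2$ and hence
\[
(f'(\tau))^2 \;=\; 4 + O(\tau^2) + (k+1)^2 f(\tau)^2.
\]
From Lemma \ref{l.3} we already know $f(\tau) \le C\tau$, so the last term is also $O(\tau^2)$, and since $f' > 0$ (because $u$ is increasing), taking the positive square root gives $f'(\tau) = 2 + O(\tau^2)$. Combining with $f(0) = 0$ and integrating yields $f(\tau) = 2\tau + O(\tau^3)$.

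Finally I convert back: writing $u = \log f = \log(2\tau) + \log(1 + O(\tau^2))$ and using $|\log(1+x)| \le 2|x|$ for small $|x|$ gives the bound $|u(\tau) - \log(2\tau)| \le C\tau^2$. For the derivative,
\[
u'(\tau) \;=\; \frac{f'(\tau)}{f(\tau)} \;=\; \frac{2 + O(\tau^2)}{2\tau + O(\tau^3)} \;=\; \tau^{-1}\bigl(1 + O(\tau^2)\bigr),
\]
which gives $|u'(\tau) - \tau^{-1}| \le C\tau$. Uniformity in $\zeta$ follows because every error term traces back to $|v'| \le C\tau$, which was itself proved uniformly in $\zeta < 1/2$. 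In fact the hard work — uniform control of $v'$ near zero and the lower bound for $u$ — has already been carried out in Lemmas \ref{l.3}--\ref{l.5}; what remains is a careful bookkeeping of error terms through the first-integral identity, so I do not anticipate any essential obstacle in this step.
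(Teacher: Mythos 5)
Your proposal is correct and follows essentially the same route as the paper: both feed the uniform bound $|v'|\le C\tau$ (hence $V = 1+O(\tau^2)$) into the first integral \eqref{integrated} and analyze $f = e^{u}$ near $\tau=0$. The only difference is cosmetic: you absorb the $b^2f^2$ term using the bound $f\le C\tau$ from Lemma \ref{lemma3} and integrate $f' = 2+O(\tau^2)$ directly, whereas the paper integrates $f'/\sqrt{b^2f^2+4}$ exactly via $\tanh^{-1}$ and then inverts; both yield the stated estimates uniformly in $\zeta$.
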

\begin{proof}
As before, we establish upper and lower bounds separately. The proof is similar to that of Lemma \eqref{lemma3}. 

Starting from \eqref{integrated}, and using $\zeta^2v'^2 \le C\tau^2$ for $\tau \leq \tau_0$, we also get
$V\ge 4 -C\tau^2$ on this same  interval. This gives
\[
u'^2 \ge (4 - C\tau^2)e^{-2u} +b^2,
\] 
so that
\[
\frac{f'^2}{b^2f^2+4} \ge \frac{4-C\tau^2 + b^2f^2}{b^2f^2+4} \ge 1-C\tau^2, \ \ \mbox{or more simply}\ \ \ 
\frac{f'}{\sqrt{b^2f^2+4}} \ge 1-C\tau^2,
\] 
since $f' = u' e^u > 0$. Integrating from $0$ to $\tau$, noting that $f(0) = 0$, we obtain 
\[
\frac{bf}{\sqrt{b^2f^2+4}} \ge \tanh(b\tau - C\tau^3) \ge b\tau - C'\tau^3.
\]
We can rewrite this as 
\[
bf \ge \frac{ b\tau - C\tau^3}{\sqrt{1-( b\tau - C\tau^3)^2}} \geq 2b \tau - C \tau^2,
\] 
whence the desired inequality. 

For the opposite inequality, $|v'(\tau)| \leq C\tau^2$ implies that $V\le 4+ C\tau^2$, so repeating the steps above, we get
\[
\frac{bf}{\sqrt{b^2f^2+4}} \le \tanh(b\tau + C\tau^3) \le b\tau + C\tau^3,
\]
from which the required inequality is straightforward. 

For the bounds on $u' - \tau^{-1}$, we prove only the lower bound since the upper bound is similar. We combine
\[
f' \ge (1- C\tau^2)\sqrt{b^2f^2+4} > 2(1 - C\tau^2), \ \ \mbox{and}\ \ \  f'= u'f \le u'(2\tau + C\tau^3)
\]
to get
\[ 
u' \ge \frac{2 - C\tau^2}{2\tau + C\tau^3} \ge \frac{1}{\tau} - C\tau.
\]
\end{proof}

\subsection{Bounds away from zero}
We have now proved bounds on the components $u$, $v$ in some uniform interval $[0, \tau_0]$.  It remains to establish appropriate uniform 
bounds on $[\tau_0, \infty)$. 

As a first step, decompose $u$ as $u = u_0 + w$, where $u_0$ is the model charge $k$ knot solution for the group $G = \mbox{SU(2)}$ when $\zeta = 0$.
Explicitly, 
\begin{multline*}
u_0(\sigma) = \log\left(\frac{((\sigma^2+1)^{1/2} + \sigma)^{k+1} - ((\sigma^2+1)^{1/2} - \sigma)^{k+1}}{2(k+1) } \right) \\
\Rightarrow u_0(\tau) =  \log \left(  ( e^{(k+1)\tau} - e^{-(k+1)\tau} )/2(k+1)\right) = \log \sinh( (k+1)\tau) - \log(k+1).
\end{multline*}
This function has precisely the same leading asymptotic  behavior as the solution $u$,  both as $\tau \to 0$ and as $\tau \to \infty$. 

\begin{lemma}\label{lemma7}
Choosing $\tau_0 > 0$ as in the previous subsection, if $\zeta < 1/2$, so $d_{\zeta} := 1 - 2\zeta > 0$, then
\begin{align}
-C \le  & \, w \le C_{\zeta}\label{w1} \\
-Ce^{-2(k+1)\tau} \le & \, w' \le Ce^{-d_{\zeta}(k+1)\tau}\label{w2}.
\end{align}
\end{lemma}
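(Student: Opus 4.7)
The plan is to use the first integral \eqref{integrated} and the explicit form of the model solution $u_0$ to compare $u'$ with $u_0'$, and then integrate to recover the bounds on $w$. The crucial preliminary observation is that $u'(\tau)\ge k+1$ throughout: by \eqref{integrated} one has $|u'|\ge k+1$, and near $\tau=0$ the asymptotics force $u'\sim 1/\tau>0$, so $u'$ cannot cross zero. In particular, integrating this on $[\tau_0,\tau]$ yields $u(\tau)\ge u(\tau_0)+(k+1)(\tau-\tau_0)$, and by Lemma \ref{lemma6} the value $u(\tau_0)$ is uniformly bounded in $\zeta$, so $e^{-2u(\tau)}\le C e^{-2(k+1)\tau}$.

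Now I would rewrite the first integral as $(u'-(k+1))(u'+(k+1))=(4V^2-\zeta^2v'^2)e^{-2u}$. Since $u'+(k+1)\ge 2(k+1)$, the bound $V\le e^{2\zeta(k+1)\tau}$ from \eqref{crude v bounds}, combined with the previous estimate on $e^{-2u}$, gives
\begin{equation*}
0\le u'(\tau)-(k+1) \le \frac{2V^2 e^{-2u}}{k+1}\le C\, e^{-2 d_\zeta (k+1)\tau},
\end{equation*}
which is at least as strong as the upper exponential rate claimed. The reference profile satisfies $u_0'(\tau)-(k+1)=(k+1)(\coth((k+1)\tau)-1)$, a direct computation showing $0\le u_0'(\tau)-(k+1)\le C e^{-2(k+1)\tau}$ for $\tau\ge \tau_0$. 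Writing $w'=(u'-(k+1))-(u_0'-(k+1))$ and using these two one-sided bounds immediately delivers \eqref{w2}: the upper bound on $w'$ comes from dropping the nonnegative $u_0'-(k+1)$ term, and the lower bound comes from dropping the nonnegative $u'-(k+1)$ term.

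To obtain \eqref{w1}, I would integrate $w'$ from $\tau_0$ to $\tau$. The anchor $w(\tau_0)=u(\tau_0)-u_0(\tau_0)$ is uniformly bounded because $u(\tau_0)=\log(2\tau_0)+O(\tau_0^2)$ by Lemma \ref{lemma6}, while $u_0(\tau_0)=\log\sinh((k+1)\tau_0)-\log(k+1)=\log\tau_0+O(\tau_0^2)$. Integrating the upper bound for $w'$ gives $w(\tau)\le w(\tau_0)+C/(d_\zeta(k+1))$, which is the $\zeta$-dependent upper constant $C_\zeta$; integrating the lower bound gives $w(\tau)\ge w(\tau_0)-C/(2(k+1))$, a $\zeta$-independent lower bound.

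The main obstacle to watch is making all constants uniform in $\zeta<1/2$ on $[\tau_0,\infty)$: the bound on $V$ degrades like $e^{2\zeta(k+1)\tau}$ as $\zeta\to 1/2$, which is precisely why $d_\zeta=1-2\zeta>0$ is needed for the upper exponent in \eqref{w2} to be negative and hence integrable, and why the upper bound in \eqref{w1} must be allowed to depend on $\zeta$. A secondary point is that the proof should appeal to Lemma \ref{lemma6} only to control $u$ at the single point $\tau_0$; beyond $\tau_0$ the argument is driven entirely by the first integral and the monotonicity of $u'$.
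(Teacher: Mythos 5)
Your proof is correct, but it takes a genuinely different (and somewhat more streamlined) route than the paper. The paper keeps the decomposition $e^{-2u}=e^{-2u_0}e^{-2w}$ inside the first integral, which leaves an unknown factor $e^{-w}$ in the inequality $w'\le Ce^{-d_\zeta(k+1)\tau}e^{-w}$; it then has to proceed in a fixed order: lower bound on $w'$ from the explicit $u_0$, integration to bound $w$ from below, the substitution $(e^{w})'\le Ce^{-d_\zeta(k+1)\tau}$ to bound $w$ from above, and only then the bound $e^{-w}\le C$ to close the upper estimate on $w'$. You instead eliminate the $e^{-w}$ coupling at the outset: from $u'\ge k+1$ (which indeed follows from \eqref{integrated}, \eqref{first bound} and the sign of $u'$ near $\tau=0$, and is consistent with Lemma \ref{lemma2}) together with the uniform anchor $u(\tau_0)$ from Lemma \ref{lemma6}, you get $e^{-2u(\tau)}\le Ce^{-2(k+1)\tau}$ directly, and the factorization $(u'-(k+1))(u'+(k+1))=(4V^2-\zeta^2v'^2)e^{-2u}$ with $u'+(k+1)\ge 2(k+1)$ then gives a pointwise bound on $u'-(k+1)$; comparing with $u_0'-(k+1)$ yields \eqref{w2} and integration yields \eqref{w1}. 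Both arguments use the same ingredients (the first integral, \eqref{crude v bounds}, the explicit $u_0$, and the anchor at $\tau_0$), but yours avoids the two-step bootstrap between $w$ and $w'$, and as a bonus the avoidance of the square root gives the stronger decay rate $e^{-2d_\zeta(k+1)\tau}$ for the upper bound on $w'$, which of course implies the rate stated in \eqref{w2}; the paper's version, by carrying $e^{-w}$ explicitly, isolates exactly where the $\zeta$-dependence of the constants enters, but for the purposes of this lemma the two are equally effective.
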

\begin{proof}
Given that $e^{-2u_0} \le Ce^{-2(k+1)\tau}$, where $C$ depends only on $\tau_0$, then using \eqref{crude v bounds}, we see that
\begin{align*}
(u_0' + w')^2 &= (4V^2 - \zeta^2v'^2)e^{-2u_0}e^{-2w} + b^2\\
&\le 4V^2e^{-2u_0}e^{-2w} +b^2\\
&\le Ce^{ 2 (2\zeta-1)(k+1)\tau}e^{-2w} + b^2 \le ( Ce^{-d_{\zeta}b\tau}e^{-w} + b)^2.
\end{align*}
Since $b  \leq u' = u_0' + w'$ and $b \leq u_0'$ as well, we have
\begin{equation}
b - u'_0 \le w' \le Ce^{-d_{\zeta} b\tau}e^{-w}.
\label{bdw}
\end{equation}

We analyze the left inequality first.  We know that 
\[
(u_0')^2 = 4e^{-2u_0} + b^2 \Rightarrow (u_0')^2 - b^2 = 4e^{-2u_0}
\]
hence 
\[ 
w' \geq b - u_0' \geq -Ce^{-2b\tau}
\] 
for $\tau \geq \tau_0$. The constant $C$ is uniform in $\zeta$ since $e^{-2u_0}$ is independent of $\zeta$. 
We have established uniform control of $w(\tau_0)$ in \eqref{lemma6}, so integrating from $\tau_0$ to $\tau$ proves the lower bounds
for both $w$ and $w'$. 

As for the upper bounds, multiply both sides of the right inequality in \eqref{bdw} by $e^w$ to get 
\[ 
(e^{w})' \le Ce^{-d_{\zeta}b\tau}, \ \ \ \tau \geq \tau_0.
\]
Integrating from $\tau_0$ to $\tau$ yields 
\[ 
w(\tau) - w(\tau_0) \le \frac{C}{d_{\zeta}b}e^{-d_{\zeta}b\tau}.
\] 
Since, as above, we have a bound on $w(\tau_0)$, we get the upper bound in \eqref{w1}.   The lower bound in \eqref{w1}, which
is already established, gives $e^{-w} \le C$, and hence 
\[ 
w' \le Ce^{-d_{\zeta} b\tau}e^{-w} \le Ce^{-d_{\zeta} b\tau}.
\]
This concludes the proof.
\end{proof}

We finally prove global bounds for $V$ and $v'$: 
\begin{lemma}\label{lemma8}
There exists $C = C(\zeta_0) > 0$ such that for $\zeta \leq \zeta_0 < 1/2$ and $\tau \geq \tau_0$, 
\[
 |V| \leq C, \quad -Ce^{-2\tau} \le v' \le C e^{-d_{\zeta}b\tau}.
\]
\end{lemma}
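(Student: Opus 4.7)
The proof hinges on the observation that, after differentiating $V = 1 - \zeta^2(k+1)v$ twice and substituting the second equation of \eqref{compactodesystem}, $V$ itself satisfies the linear second-order ODE
$$V'' - 2u' V' + 4\zeta^2 b(u'T - b) V = 0,$$
or equivalently $(V' e^{-2u})' = -4\zeta^2 b (u'T - b) V e^{-2u}$. By Lemma~\ref{lemma7} we have $0 \leq u' - b \leq Ce^{-d_\zeta b\tau}$, and since $|T - 1| \leq Ce^{-2\tau}$, the coefficient $u'T - b$ decays at the rate $|u'T - b| \leq C(e^{-2\tau} + e^{-d_\zeta b\tau})$. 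The same lemma gives $e^{-2u} \leq Ce^{-2b\tau}$ on $[\tau_0, \infty)$ with $C = C(\zeta_0)$.

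The plan is to bootstrap on this linear equation to bound $|V|$. The crude estimates \eqref{crude v bounds} together with $\zeta < 1/2$ show that $V'(\tau) e^{-2u(\tau)} \to 0$ as $\tau \to \infty$, so integrating the ODE from $\tau$ to $\infty$ yields the integral representation
$$V'(\tau) e^{-2u(\tau)} = 4\zeta^2 b \int_\tau^\infty (u'T - b)(\xi)\, V(\xi)\, e^{-2u(\xi)}\, d\xi.$$
If one knows $|V(\xi)| \leq C_0 e^{\gamma \xi}$ (true with $\gamma = 2\zeta b$ by the crude bound), this identity combined with $|u'T-b| \leq Ce^{-\alpha\xi}$ for $\alpha := \min(2, d_\zeta b) > 0$ and $e^{2u(\tau)} \leq Ce^{2b\tau}$ gives $|V'(\tau)| \leq C_1 e^{(\gamma - \alpha)\tau}$, and integrating this in $\tau$ yields a new bound on $|V|$ with growth exponent reduced by $\alpha$. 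Iterating this procedure a finite number of times---depending only on the ratio $2\zeta_0 b/\alpha$---drives the growth exponent to zero or below and produces the uniform constant bound $|V| \leq C(\zeta_0)$ on $[\tau_0,\infty)$.

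With $|V| \leq C$ established, the two-sided bound on $v'$ follows from the analogous integral representation
$$v'(\tau) = -4 e^{2u(\tau)} \int_\tau^\infty V(\xi)(u'T-b)(\xi)\, e^{-2u(\xi)}\, d\xi,$$
obtained from $(v' e^{-2u})' = 4V(u'T-b)e^{-2u}$ upon noting that the boundary term at infinity vanishes because $|v'| \leq 2|V|/\zeta$ is now bounded while $e^{-2u}$ decays exponentially. Splitting $u'T - b = u'(T-1) + (u'-b)$ into its nonpositive and nonnegative pieces and using the respective decay rates $Ce^{-2\tau}$ and $Ce^{-d_\zeta b\tau}$ then produces the stated upper and lower bounds on $v'$ after the factor $e^{2u(\tau)} \leq Ce^{2b\tau}$ cancels one copy of $e^{-2b\xi}$ in the integrand.

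The main technical obstacle is closing the bootstrap for $V$: Lemma~\ref{lemma7} by itself does not imply that $(u'^2 - b^2)e^{2u}$ is bounded, so the first integral alone is insufficient---the linear structure of the $V$-equation must be exploited directly. The condition $\zeta \leq \zeta_0 < 1/2$ enters exactly because it forces $d_\zeta > 0$, hence $\alpha > 0$, which is what guarantees that the iteration terminates after finitely many steps with a constant $C(\zeta_0)$ that remains finite.
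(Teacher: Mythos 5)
Your proof is correct and is essentially the paper's own argument: since $V=1-\zeta^{2}(k+1)v$, your linear equation $(V'e^{-2u})'=-4\zeta^{2}b\,(u'T-b)Ve^{-2u}$ is just a constant multiple of the identity $(v'e^{-2u})'=4V(u'T-b)e^{-2u}$ that the paper bootstraps in \eqref{bootstrap}, and your iteration lowering the growth exponent of $V$ by $\min(2,d_{\zeta}b)$ at each step, followed by splitting $u'T-b$ into its nonpositive and nonnegative pieces, is the same scheme. Two cosmetic points only: when the exponent lands exactly at zero you get $|V|\le C(1+\tau)$ rather than a constant and need one extra pass (the paper treats this borderline case explicitly), and your representation $v'(\tau)=-4e^{2u(\tau)}\int_{\tau}^{\infty}V(u'T-b)e^{-2u}\,d\xi$ actually attaches the $e^{-2\tau}$ rate to the upper bound and the $e^{-d_{\zeta}b\tau}$ rate to the lower bound --- the same bookkeeping as in the paper's own integration step, and harmless since both exponents are positive for $\zeta\le\zeta_{0}$.
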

\begin{proof}
Rerite the second equation in \eqref{compactodesystem} as 
\[
(v'e^{-2u})' = 4e^{-2u}V(u'T - (k+1)).
\]
By \eqref{w2}, we have 
\[ 
b \le u' \le b + Ce^{-d_{\zeta}b\tau}.
\] 
Combining this with 
\[ 
1 - 2e^{-2\tau} \le T \le 1,
\]
leads to 
\begin{equation}\label{bootstrap}
-Ce^{-2u-2\tau}V \le (v'e^{-2u})' \le Ce^{-2u-d_{\zeta}b\tau}V, \ \ \ \tau \geq \tau_0/
\end{equation}
Then \eqref{crude v bounds} and the bounds for $u$ give
\begin{equation*}
-Ce^{-((2-2\zeta)b+2)\tau} \le (v'e^{-2u})' \le Ce^{-(d_{\zeta}+2-2\zeta)b\tau}.
\end{equation*}
Integrating from $\tau \geq \tau_0$ to $\infty$ and using \eqref{crude v bounds} to get the exponential decay of $v'e^{-2u}$,
we arrive at
\begin{equation*}
-Ce^{(2\zeta b - 2)\tau} \le v' \le Ce^{(2\zeta - d_{\zeta})b\tau}.
\end{equation*}

This is an improvement compared to the second bound in \eqref{crude v bounds}: the exponent in the lower bound has dropped from 
$2\zeta b$ to $2\zeta b - 2$, while on the right hand side it has dropped by $d_{\zeta}b$.  
Repeating this argument iteratively, then so long as $d_\zeta > 0$, we eventually obtain exponentially decaying functions
both above and below, as desired.  Consider only the upper bound for simplicity. Let $n_{\zeta}$ be the largest integer 
for which $2\zeta - n_{\zeta}d_{\zeta} > 0$.  Integrating from $\tau_0$ to $\tau$ and using \eqref{upper v' bound} we get that 
\[
V(\tau) \le C + \frac{C}{(2\zeta - d_{\zeta})b}e^{(2\zeta - d_{\zeta})b\tau} \le \frac{C}{(2\zeta - d_{\zeta})b}e^{(2\zeta - d_{\zeta})b\tau}
\]
for some possibly different constant $C$. Plugging this improvement into \eqref{bootstrap} and repeating the steps,
we obtain inductively 
\[
V(\tau) \le C e^{(2\zeta - n_{\zeta}d_{\zeta})b\tau}\big/ \prod \limits_{n = 1}^{n_{\zeta}}(2\zeta - nd_{\zeta})b) .
\]

We must distinguish between two cases. If $2\zeta - (n_{\zeta}+1)d_{\zeta} < 0$, then using this final estimate in \eqref{bootstrap}  gives
the exponentially decaying upper bound for $v'$.  Repeating the steps above, we also see that $V \leq C$. Using this bound \eqref{bootstrap} 
one last time, we obtain the required upper bound for $v'$. 

The other case is when $2\zeta - (n_{\zeta}+1)d_{\zeta} = 0$. We then must repeat the steps one additional time; at the first pass, we deduce
that $v' \leq C$, so $V \leq C \tau$, but this implies that $v'$ is bounded above by an exponentially decreasing function, and hence that
$V \leq C$.
\end{proof}

\subsection{Conclusion of the continuity argument}
Now that have established uniform bounds for $u$ and $v$ for $\zeta$ in any interval $[0, \zeta_0]$, $\zeta_0 < 1/2$, we can complete
the proof of existence of model knot solutions for every $\zeta \in [0, 1/2)$. 

\begin{theorem}
Suppose that $\zeta_j \nearrow \bar{\zeta} < 1/2$ and that $(u_j,v_j)$ is a corresponding sequence of solutions to \eqref{compactodesystem} 
with parameter $\zeta_j$. Then there exists a subsequence $(u_{j'}, v_{j'})$ which converges smoothly on every compact interval $0 < a \leq b < \infty$,
and such that the limiting solution $(\bar{u}, \bar{v})$ satisfies the same asymptotics as $\tau \to 0$ and $\tau \to \infty$. 
\end{theorem}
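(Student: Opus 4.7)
The plan is a standard compactness argument: use the uniform estimates of Subsections 4.1--4.3 to extract a smoothly convergent subsequence, pass to the limit in the ODE system, and check that the boundary asymptotics survive the limit.

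First, I would fix $\zeta_0$ with $\bar\zeta \leq \zeta_0 < 1/2$, so $d_{\zeta_0} := 1 - 2\zeta_0 > 0$ and every $\zeta_j \leq \zeta_0$. Lemmas \ref{lemma6}, \ref{lemma7}, and \ref{lemma8}, together with the bounds \eqref{upper v' bound}--\eqref{lower v' bound}, supply $j$-uniform control of $u_j$, $u_j'$, $v_j$, $v_j'$ on every compact subinterval $[a,b] \subset (0,\infty)$, with constants depending only on $\zeta_0$. Since the right-hand sides of \eqref{compactodesystem} are smooth in all arguments and in $\zeta$, these pointwise bounds transfer immediately to uniform bounds on $u_j''$ and $v_j''$, and differentiating the system repeatedly yields uniform $C^m$ bounds on $[a,b]$ for every $m$. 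An Arzel\`a--Ascoli argument combined with a diagonal extraction over an exhaustion of $(0,\infty)$ then produces a subsequence $(u_{j'},v_{j'})$ converging in $C^{\infty}_{\mathrm{loc}}((0,\infty))$ to a limit $(\bar u, \bar v)$ that solves \eqref{compactodesystem} at parameter $\bar\zeta$.

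The asymptotics at $\tau \searrow 0$ follow almost directly from Lemma \ref{lemma6}: the uniform estimates $|u_j(\tau) - \log(2\tau)| \leq C\tau^2$ and $|u_j'(\tau) - \tau^{-1}| \leq C\tau$ pass to $\bar u$, yielding $\bar u(\tau) \sim \log \tau$. The uniform bound $|v_j'(\tau)| \leq C\tau$ from Lemmas \ref{lemma4}--\ref{lemma5}, together with $v_j(0)=0$, gives $\bar v(0) = \bar v'(0) = 0$ and $|\bar v(\tau)| \leq C\tau^2$. At $\tau \to \infty$, Lemma \ref{lemma7} gives $|w_j| \leq C_{\zeta_0}$ and $|w_j'| \leq C e^{-d_{\zeta_0}(k+1)\tau}$ with $w_j = u_j - u_0$, so $\bar u - u_0$ tends to a finite constant and hence $\bar u(\tau) \sim (k+1)\tau$; Lemma \ref{lemma8} gives exponential decay of $|v_j'|$ uniformly in $\zeta_j \leq \zeta_0$ and $|V_j| \leq C_{\zeta_0}$, from which $\bar v$ is uniformly bounded on $[\tau_0,\infty)$ and the growth condition $|\bar v(\tau)| \leq C e^{(k+1)(1-\varepsilon)\tau}$ is trivially satisfied.

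The one step that is not a direct transfer of uniform bounds is the identification of the precise leading coefficient of $\bar v$ at the origin; $C^1_{\mathrm{loc}}$ convergence alone only furnishes a quadratic envelope, and the sharper $\bar v \sim 2k\tau^2$ must be recovered from the indicial structure of the limiting ODE. I would do this by reading off the second equation of \eqref{compactodesystem} for $\bar v$ as $\tau \to 0^{+}$: substituting $\bar u'(\tau) = \tau^{-1} + O(\tau)$, $V(\tau) = 1 + O(\tau^2)$, $T(\tau) \sim \tau$, and the Taylor expansion $\bar v'(\tau) = \bar v''(0)\tau + o(\tau)$, the ODE reduces to $\bar v''(0) = 2\bar v''(0) - 4k + o(1)$, forcing $\bar v''(0) = 4k$ and therefore $\bar v \sim 2k\tau^2$. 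I expect this indicial matching to be the only subtle point; everything else is a routine limit of already-established uniform estimates.
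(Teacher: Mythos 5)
Your proposal follows the paper's argument almost verbatim: uniform $C^1$ bounds from Lemmas \ref{lemma4}--\ref{lemma8}, bootstrapping through the ODE to uniform $C^m$ bounds on compacts, a diagonal Arzel\`a--Ascoli extraction, and preservation of the asymptotics from the $\zeta$-independent estimates near $0$ and $\infty$. The only place you diverge is the one you flag yourself: the leading coefficient $\bar v \sim 2k\tau^2$. The paper recovers the precise expansion at $\tau=0$ (and real analyticity) by rewriting \eqref{compactodesystem} as the first-order regular-singular system $\tau U' = F(U,\tau)$ and invoking the classical convergent-series theory, using that the indicial roots are positive integers; you instead do a direct indicial matching. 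Your matching computation gives the right answer, but as written it presupposes that $\bar v''$ extends continuously to $\tau = 0$, i.e.\ that $\bar v'(\tau) = \bar v''(0)\tau + o(\tau)$, and this does not follow from $C^{\infty}_{\mathrm{loc}}((0,\infty))$ convergence plus the envelope $|\bar v'| \le C\tau$ alone --- a priori $\bar v'(\tau)/\tau$ could oscillate as $\tau \searrow 0$. The gap is easily repaired without appealing to your assumed Taylor expansion: from the limiting equation and Lemma \ref{lemma6} one has $(\bar v' e^{-2\bar u})' = 4e^{-2\bar u}V(\bar u'T-(k+1)) = -k\tau^{-2} + O(1)$ near $0$, and integrating this together with $e^{2\bar u} = 4\tau^2(1+O(\tau^2))$ and the bound $|\bar v'|\le C\tau$ yields $\bar v' = 4k\tau + O(\tau^2)$, hence $\bar v \sim 2k\tau^2$; alternatively, cite the regular-singular expansion as the paper does. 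With that one step justified, your proof is complete and equivalent to the paper's.
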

\begin{proof}
Write $u_j = u_0 + w_j$ as before. By Lemmas \eqref{lemma6} and \eqref{lemma7}, $w_j$ is uniformly bounded on $\mathbb R^+$ and $|w_j'(\tau)|$
decays at a uniform exponential rate as $\tau \to \infty$ and is uniformly bounded near $\tau = 0$. Lemmas \eqref{lemma4} , \eqref{lemma5} 
and \eqref{lemma8} provide the analogous uniform bounds for $v_j$ and $v_j'$.  The equation itself now gives uniform bounds on 
$u_j''$ and $v_j''$ on any compact interval, and this may be bootstrapped to bounds for any derivative.  By taking a diagonal subsequence
over increasing intervals, we obtain limiting functions $(\bar{u}, \bar{v})$, and by our uniform bounds near $0$ and $\infty$, this
limiting solution also has the same asymptotics. 

Of course, $(\bar{u}, \bar{v})$ is real analytic as well. Indeed, rewrite \eqref{compactodesystem} as 
\begin{equation}\label{wsystem}
\begin{split}
w'' &= - e^{-2u_0}(e^{-2w}(4V^2+\zeta^2v'^2 + 4\zeta^2Vv'T) -4)\\
v'' &= (u_0' + w')(2v' + 4VT) - 4(k+1)V.
\end{split}
\end{equation}
This system can be written in the form
\[ 
\tau U' = F( U, \tau),
\] 
where $\vec U:= (u,u',v,v')$ and $F$ is analytic in its arguments,   This is a regular singular equation, and classical theory gives a
convergent series expansion. Analyticity near $\tau = 0$ then follows since the indicial roots of this system at $\tau = 0$ 
are positive integers. 
\end{proof}

\printbibliography

\end{document}